\newtheorem{definition}{Definition}[section]
\newtheorem{lemma}{Lemma}[section]
\newtheorem{proposition}{Proposition}[section]
\newtheorem{remark}{Remark}[section]
\begin{document}
\title{{Equilibrium price and optimal insider trading strategy under stochastic liquidity with long memory }\thanks{This work was supported by  the National Natural Science Foundation of China (11471230, 11671282).}}
\author{{ Ben-zhang Yang$^a$, Xinjiang He$^b$, Nan-jing Huang$^{a,}$\thanks{Corresponding author. E-mail address: nanjinghuang@hotmail.com}}\\
{\small\it a. Department of Mathematics, Sichuan University, Chengdu, Sichuan 610064, P.R. China}\\
{\small\it b. School of Mathematics and Applied Statistics, University of Wollongong NSW 2522, Australia}
}
\date{}
\maketitle
\vspace*{-9mm}
\begin{center}
\begin{minipage}{5.8in}
{\bf Abstract:}
In this paper, the Kyle model of insider trading is extended by characterizing the trading volume with long memory and allowing the noise trading volatility to follow a general stochastic process. Under this newly revised model, the equilibrium conditions are determined, with which the optimal insider trading strategy, price impact and price volatility are obtained explicitly. The volatility of the price volatility appears excessive, which is a result of the fact that a more aggressive trading strategy is chosen by the insider when uninformed volume is higher. The optimal trading strategy turns out to possess the property of long memory, and the price impact is also affected by the fractional noise.
\\ \ \\
{\bf Key Words:}
Asset pricing; equilibrium; insider trading; optimal investment; liquidity; fractional Brownian motion.
\\ \ \\

\end{minipage}
\end{center}
\section{Introduction}

Equilibrium asset pricing is always an important and ongoing topic for any financial market, and it has been extensively studied by a number of authors \cite{Chang05,Collin16,Guasoni,Liu15,Yang17}. It also can be used to deal with insider trading problem and the history can be dated back to 1985, when Kyle \cite{Kyle85} developed a model, in which a large trader is assumed to possess long-lived private information about the value of a stock that will be revealed at some known date and optimally trades continuously into the stock to maximize his/her expected profits, while risk-neutral market makers try to infer the information possessed by the insider from the aggregate order flow. The resulted equilibrium price dynamic actually responds linearly to the order flow instead of being fully revealed, since the order flow is also driven by uninformed noise traders aiming solely at liquidity purposes. Albeit appealing, the Kyle model also suffers from two major drawbacks; Kyle's lambda, which is a measurement of the equilibrium price impact of the order flow is constant, and the price volatility turns out to be constant and independent of the noise trading volatility.

To further investigate the impact of asymmetric information on asset prices, volatility, volume, and market liquidity, Admati and Pfleiderer \cite{Admati88}, Foster and Viswanathan \cite{Foster90,Foster93} were the first to modify the Kyle model, extending it to dynamic economies with myopic agents so that the price volatility is no longer constant since new private information is introduced every period caused by the deterministic changes in the noise trading volatility. Stochastic variation was also introduced in the noise trading volume by Foster and Viswanathan \cite{Foster95}, and this model was then empirically tested to show the joint behavior of the price volatility, volume, and price impact. However, a noticeable shortcoming of these extensions is the assumption that the information is short-lived, which is not realistic. Later on, deterministic noise trading volatility was directly incorporated into the Kyle model so that intra-day patterns of liquidity trading can be successfully captured \cite{Back98}. Such an assumption on the noise trading volatility is still not appropriate since it gives rise to the deterministic price impact and further leads to the situation where expected execution costs of liquidity traders do not depend on the timing of their trading. A more recent extension to the Kyle model was proposed by Collin-Dufresne and Fos \cite{Collin16}, who considered stochastic noise trading volatility, and showed that this is the only guarantee that insider's liquidity timing option can generate a stochastic price volatility as well as a nonzero correlation among the price volatility, market depth, and uninformed trading volume.

Despite all the advantages of the model proposed in \cite{Collin16}, it has not taken into consideration that the volatility is characterized by long memory, which is a consensus among financial econometricians. In particular, Bollerslev and Jubinski \cite{Bollerslev99} presented the evidence of long memory in the volatility and investigated the extent to which the volume and volatility share common long-run dependencies. In line with this, they also proposed that the mixture-of-distributions hypotheses (MDH) might be better viewed as long-run proposition, after observing that occasional structural breaks can give rise to long-range dependence \cite{Granger04}. This implies long-range dependence could be induced under the MDH by occasional changes in the mean and/or volatility of the intra-day volumes and returns, which are generated due to the reaction traders make to information events. Furthermore, fractionally-integrated models were shown to provide a useful description of volatility dynamics in the presence of structural breaks because they effectively allow the unconditional variance to change slowly over time \cite{Diebold01}, and Hyung et al. \cite{Hyung06} further demonstrated that fractionally-integrated models provide the best volatility forecasts, when it is impossible to identify the volatility breaks before they occur. Very recently, a new fractional stochastic process for the volatility was proposed in \cite{Yue17,Yue18}, in which the general fractional stochastic models are shown to better capture dynamic volatilities of the assets and changing correlations between the returns and volatilities of the underlying assets.

Considering the amount of resources spent by market participants to separate the component of the informed order flow from the uninformed one as well as the long memory property exhibited by the volatility, it is very demanding to understand how these affect the equilibrium price and liquidity, since similar to most of quantitative finance/economics area \cite{He16, Zhu18}, finding an appropriate model that reflects the most characteristics of the real market is vital to perform correct analysis. To incorporate these, in this paper, the Kyle model is generalized so that the noise trading volatility is allowed to evolve under a stochastic setting, and the trading volume is governed by a fractional stochastic dynamic system perturbed by a memory noise. The main contributions of this paper can be summarized into four aspects. Firstly, a new general framework involving the fractional stochastic environment is established for the insider trading problem, and the corresponding equilibrium price is shown to follow a new class of bridge processes that converge almost surely at maturity to the ex ante value, of which only the insider has the knowledge. Secondly, the derived optimal trading strategy displays long memory, being proportional to the undervaluation of the asset with the corresponding rate being an increasing and decreasing function of the current state of liquidity and price impact, respectively. Thirdly, the price impact proves to be a submartingale, being negatively correlated with the noise trading volatility, and it is also affected by the fractional noise. Lastly, the rate of price discovery in our new model increases with the noise trading volatility, while an apposite trend can be observed when the price impact is taken into consideration.

The rest of this paper is organized into five sections. Section 2 presents some necessary preliminaries. In Section 3, we introduce the general model and the details on achieving an equilibrium are illustrated. In Section 4, two specific cases are considered to emphasize key features of the equilibrium and possible extensions are discussed followed by some concluding remarks are given in the last section.

\section{Preliminaries}
In this section, some useful concepts and results are presented. We assume that $(\Omega,\mathcal{F}_T,P)$ is a probability space, with $\mathcal{F}_t\triangleq \sigma(X_s;0\leq s \leq t)$ being the smallest $\sigma$-filed with respect to which the random variable $X_s$ is measurable for all $s \in [0,t]$. Let $L^{\infty}(\Omega,\mathcal{F}_T,P)$ represent the space of all bounded and $\mathcal{F}$-measurable random variables and $\mathcal{M}^{c,loc}$ denote the space of all continuous local martingales. Assume that $\langle X\rangle_t$ is defined as the quadratic variation process of $X_t$ and $C_1$ stands for the space of the continuous functions $x=(x_t;0\leq t \leq 1)$ on $[0,1]$. We also let $\mathcal{P}$ and $\mathcal{P}\bigotimes \mathcal{B}(R^{d+1})$ be the predictable $\sigma$-field and the product $\sigma$-filed formed from the $\sigma$-fields $\mathcal{P}$ and $\mathcal{B}(R^{d+1})$, respectively.

\begin{lemma}\label{strong0}(Theorem 4.6 in Chapter 3 of \cite{Karatzas})
Let $M=\{M_t,\mathcal{F}_t;0\leq t \leq \infty\}\in \mathcal{M}^{c,loc}$ satisfy
$\lim_{t \rightarrow \infty}\langle M\rangle_t=\infty$, $a.s.$ Define, for each $0\leq s<\infty$, the stopping time as
$$T(s)=\inf\{t\geq0;\langle M\rangle_t>s \}.$$
Then the following time-changed process
$$B_s\triangleq M_{T(s)}, \quad \mathcal{G}_s\triangleq \mathcal{F}_{T(s)}; \quad 0\leq s <\infty$$
is a standard Brownian motion. In particular, the filtration $\{\mathcal{G}_s\}$ satisifies the usual conditions and
$$M_t=B_{\langle M\rangle_t};\quad 0\leq t<\infty, \quad a.s.$$
\end{lemma}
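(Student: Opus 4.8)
The plan is to reduce the statement to L\'evy's characterization of Brownian motion after analysing the time change. First I would record the elementary properties of $T(s)$: because $t\mapsto\langle M\rangle_t$ is continuous, nondecreasing, null at $0$ and increases to $\infty$ almost surely, each $T(s)$ is an $\{\mathcal{F}_t\}$-stopping time, the map $s\mapsto T(s)$ is nondecreasing and right-continuous, and continuity of $\langle M\rangle$ forces $\langle M\rangle_{T(s)}=s$ for every $s\geq0$. Right-continuity of $s\mapsto T(s)$ together with right-continuity of $\{\mathcal{F}_t\}$ then yields that $\{\mathcal{G}_s\}=\{\mathcal{F}_{T(s)}\}$ is right-continuous, while completeness is inherited from $\{\mathcal{F}_t\}$; hence $\{\mathcal{G}_s\}$ satisfies the usual conditions.

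The crucial structural fact is that, almost surely, $M$ is constant on every interval on which $\langle M\rangle$ is constant. For a rational $r\geq0$ I would introduce the stopping time $S_r\triangleq\inf\{t\geq r:\langle M\rangle_t>\langle M\rangle_r\}$ and observe that the stopped difference $N_t\triangleq M_{t\wedge S_r}-M_{t\wedge r}$ is a continuous local martingale whose quadratic variation equals $\langle M\rangle_{t\wedge S_r}-\langle M\rangle_{t\wedge r}$, which vanishes identically by the definition of $S_r$ and the continuity of $\langle M\rangle$. Since a continuous local martingale of zero quadratic variation is almost surely constant and $N_0=0$, this forces $M_{t\wedge S_r}=M_{t\wedge r}$, i.e.\ $M$ is constant on $[r,S_r]$; ranging over rationals $r$ and using path-continuity gives constancy of $M$ on every constancy interval of $\langle M\rangle$. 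A consequence is that $s\mapsto M_{T(s)}$ is insensitive to the ambiguity of $T$ inside flat stretches and is in fact continuous --- right-continuity from right-continuity of $T$ and continuity of $M$, and continuity at the left endpoint of a flat stretch of $\langle M\rangle$ from the constancy of $M$ there --- so that $B_s\triangleq M_{T(s)}$ is a well-defined continuous, $\{\mathcal{G}_s\}$-adapted process with $B_0=0$.

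Next I would establish that $B$ is a continuous $\{\mathcal{G}_s\}$-local martingale with $\langle B\rangle_s=s$. Choosing $R_n\uparrow\infty$ so that $M^{R_n}$ and $(M^{R_n})^2-\langle M\rangle^{R_n}$ are uniformly integrable martingales, the optional sampling theorem applied with the stopping times $T(s)\leq T(u)$ gives, for $s\leq u$, that $E[M_{T(u)\wedge R_n}\mid\mathcal{G}_s]=M_{T(s)\wedge R_n}$ and $E[M_{T(u)\wedge R_n}^2-\langle M\rangle_{T(u)\wedge R_n}\mid\mathcal{G}_s]=M_{T(s)\wedge R_n}^2-\langle M\rangle_{T(s)\wedge R_n}$. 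Letting $n\to\infty$, using $T(s)\wedge R_n\to T(s)$, path-continuity of $M$, and $\langle M\rangle_{T(s)}=s$, identifies $B$ as a continuous local martingale with $\langle B\rangle_s=s$; L\'evy's characterization theorem then shows $B$ is a standard $\{\mathcal{G}_s\}$-Brownian motion.

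Finally, for the representation $M_t=B_{\langle M\rangle_t}$, I would fix $t$ and note that $T(\langle M\rangle_t)\geq t$, with strict inequality only when $\langle M\rangle$ is flat on a right-neighbourhood of $t$; in either case the constancy of $M$ on flat stretches of $\langle M\rangle$ gives $M_{T(\langle M\rangle_t)}=M_t$, hence $B_{\langle M\rangle_t}=M_t$ almost surely, and continuity of both sides makes this hold for all $t$ outside a single null set. The main obstacle is the constancy-on-flat-stretches lemma together with the localization needed to run the optional-sampling computations legitimately without any a priori integrability of $M$; the remainder is bookkeeping with the time change and the appeal to L\'evy's theorem.
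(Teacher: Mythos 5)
This lemma is the Dambis--Dubins--Schwarz time-change theorem, which the paper simply quotes from Karatzas and Shreve without giving a proof; your argument is precisely the standard proof from that reference (constancy of $M$ on the flat stretches of $\langle M\rangle$, optional sampling at $T(s)\le T(u)$ after localization, L\'evy's characterization, and the identification $M_t=B_{\langle M\rangle_t}$ via the flat-stretch lemma), and it is correct. The one step worth tightening is the passage to the limit in the optional-sampling computation, which is most cleanly justified by observing that $M^{T(u)}$ is an $L^2$-bounded (hence uniformly integrable) martingale because $E[\langle M\rangle_{T(u)}]=u<\infty$, so no limiting argument in $n$ is actually needed.
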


\begin{lemma}\label{strong}(Problem 9.3 in Chapter 2 of \cite{Karatzas})
Let $W=\{W_t,\mathcal{F}_t; 0<t<\infty\}$ be a standard Brownian motion. Then
$$\lim_{t\rightarrow \infty}\frac{W_t}{t}=0, \quad a.s.$$
\end{lemma}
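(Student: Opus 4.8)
The plan is to reduce this continuous-time statement to a discrete strong law and then control the oscillation of $W$ between integer times. First I would invoke the classical strong law of large numbers for the i.i.d. standard normal increments $\xi_k := W_k - W_{k-1}$, $k\ge 1$: since $W_n = \sum_{k=1}^n \xi_k$ with $E[\xi_1]=0$, we get $W_n/n \to 0$ a.s. as $n\to\infty$ along the integers. This handles the ``skeleton'' $\{W_n\}$.

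Next I would show that the fluctuation of $W$ over each unit interval, $R_n := \sup_{0\le u\le 1}|W_{n+u}-W_n|$, is negligible compared with $n$. The variables $(R_n)_{n\ge 0}$ are i.i.d. by the independence and stationarity of Brownian increments, and Doob's $L^2$ maximal inequality together with $E[W_1^2]=1$ gives $E[R_0^2] = E\big[\sup_{0\le u\le 1}W_u^2\big] \le 4 < \infty$. Hence by Chebyshev, $\sum_n P(R_n > \varepsilon n) \le \varepsilon^{-2}E[R_0^2]\sum_n n^{-2} < \infty$ for every $\varepsilon>0$, so the Borel--Cantelli lemma yields $R_n/n\to 0$ a.s. For $t\in[n,n+1]$ with $n\ge 1$ we then bound $|W_t|/t \le (|W_n| + R_n)/n$, and letting $t\to\infty$ (so $n=\lfloor t\rfloor \to\infty$) both terms on the right tend to $0$ a.s.; this proves $W_t/t\to 0$ a.s.

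An alternative route I would keep in mind is the time-inversion trick: verify that $\tilde W_t := tW_{1/t}$ for $t>0$, $\tilde W_0:=0$, is again a standard Brownian motion (it is a centered Gaussian process with covariance $st\min(1/s,1/t)=\min(s,t)$, and it is continuous on $(0,\infty)$), and then read off $W_t/t = \tilde W_{1/t}\to \tilde W_0 = 0$ a.s. as $t\to\infty$. The one subtle point in this route --- and the step I expect to be the main obstacle --- is justifying the a.s. continuity of $\tilde W$ at $0$, i.e.\ that $\lim_{t\downarrow 0}\tilde W_t = 0$ a.s.; this follows because the restriction of $\tilde W$ to rational times has the same law as that of $W$, for which the corresponding limit is trivially zero, but it must be argued with a little care (or circumvented via Blumenthal's $0$--$1$ law). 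In the Borel--Cantelli argument the analogous mild obstacle is merely the integrability $E[R_0^2]<\infty$, which is immediate from Doob's inequality, so I would present that argument as the main one. (One could also simply quote the law of the iterated logarithm, $\limsup_{t\to\infty}W_t/\sqrt{2t\log\log t}=1$ a.s., which gives the claim at once, but that is far more machinery than the statement needs.)
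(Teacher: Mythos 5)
Your proof is correct. The paper does not actually prove this lemma --- it is quoted without proof as Problem 9.3 in Chapter 2 of Karatzas--Shreve --- and your argument (SLLN for the i.i.d.\ increments $W_k-W_{k-1}$ on the integer skeleton, then Doob's $L^2$ maximal inequality, Chebyshev, and Borel--Cantelli to get $R_n/n\to 0$ a.s.\ for the unit-interval oscillations, followed by the bound $|W_t|/t\le(|W_{\lfloor t\rfloor}|+R_{\lfloor t\rfloor})/\lfloor t\rfloor$) is exactly the standard solution of that exercise, so there is nothing to compare against; the only point worth a clause is that Borel--Cantelli gives, for each fixed $\varepsilon>0$, that a.s.\ $R_n\le\varepsilon n$ eventually, and one then intersects over a countable sequence $\varepsilon\downarrow 0$ to conclude $R_n/n\to 0$ a.s.
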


\begin{lemma}\label{sde1}(Theorem 4.6 in \cite{Liptser01})
Let the non-anticipative functionals $a(t,x)$, $b(t,x)$, $t\in[0,1]$, $x\in C_1$, satisfy the Lipschitz condition
\begin{equation*}
\begin{aligned}
|a(t,x)-a(t,y)|^2+|b(t,x)-b(t,y)|^2\leq L_1 \int_{0}^{t}|x_s-y_s|^2dK(s)+L_2|x_t-y_t|^2
\end{aligned}
\end{equation*}
and
\begin{equation*}
\begin{aligned}
a^2(t,x)+b^2(t,x)\leq L_1 \int_{0}^{t}(1+x_s^2)dK(s)+L_2(1+x_t^2),
\end{aligned}
\end{equation*}
where $L_1$ and $L_2$ are constants, $K(s)$ is a non-decreasing right continuous function with $0<K(s)<1$, and $x,y\in C_1$. Let $\eta=\eta(\omega)$ be an $\mathcal{F}_0$-measurable random variable with $P(|\eta(\omega)<\infty|)=1$. Then the equation
$$dx_t=a(t,x)dt+b(t,x)dW_t, \quad x_0=\eta,$$
has a unique strong solution $\xi=(\xi_t,\mathcal{F}_t)$.
\end{lemma}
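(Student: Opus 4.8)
The plan is to build a solution by Picard successive approximation and to establish uniqueness through a Gronwall argument adapted to the Stieltjes integrator $K$. First I would reduce to the case of a bounded initial condition: for $N\ge1$ put $\eta_N=\eta\,\mathbf{1}_{\{|\eta|\le N\}}$, which is $\mathcal F_0$-measurable and bounded. Granting temporarily that the equation with initial datum $\eta_N$ has a unique strong solution $\xi^N$, the uniqueness part below forces $\xi^N=\xi^M$ on $\{|\eta|\le N\wedge M\}$, so since $P(|\eta|<\infty)=1$ the processes $\xi^N$ paste together into a strong solution for $\eta$. Hence one may assume $|\eta|\le N$ for some constant $N$.

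Next, set $\xi^{(0)}_t\equiv\eta$ and, recursively, $\xi^{(n+1)}_t=\eta+\int_0^t a(s,\xi^{(n)})\,ds+\int_0^t b(s,\xi^{(n)})\,dW_s$ for $t\in[0,1]$. Using the linear growth bound on $a^2+b^2$ together with $0<K<1$, an induction shows that each $\xi^{(n)}$ is a continuous $\mathcal F_t$-adapted process with $\sup_{t\le1}E|\xi^{(n)}_t|^2<\infty$. Put $\phi_n(t)=E\sup_{s\le t}|\xi^{(n+1)}_s-\xi^{(n)}_s|^2$, which is non-decreasing in $t$. Estimating the drift increment by the Cauchy--Schwarz inequality and the martingale increment by Doob's $L^2$ maximal inequality together with the It\^o isometry, then invoking the Lipschitz hypothesis and the bound $\int_0^t\phi_{n-1}(s)\,dK(s)\le\phi_{n-1}(t)K(t)\le\phi_{n-1}(t)$, I would arrive at a recursion of the form $\phi_n(t)\le C\int_0^t\phi_{n-1}(s)\,ds$ with a constant $C=C(L_1,L_2)$. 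Iterating gives $\phi_n(1)\le C^n\phi_0(1)/n!$, hence $\sum_n\phi_n(1)^{1/2}<\infty$; Chebyshev's inequality and the Borel--Cantelli lemma then yield almost sure uniform convergence of $(\xi^{(n)})$ on $[0,1]$ to a continuous $\mathcal F_t$-adapted limit $\xi$. Passing to the limit in the integral equation---the ordinary integral by dominated convergence, the stochastic integral in $L^2$ via the It\^o isometry and the Lipschitz bound---shows that $\xi=(\xi_t,\mathcal F_t)$ is a strong solution.

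For uniqueness, let $\xi$ and $\widetilde\xi$ be strong solutions with $\xi_0=\widetilde\xi_0=\eta$. Introduce the stopping times $\tau_m=\inf\{t:|\xi_t|+|\widetilde\xi_t|\ge m\}$ so that the stopped processes are square-integrable, and set $\psi(t)=E\sup_{s\le t\wedge\tau_m}|\xi_s-\widetilde\xi_s|^2$. The same Doob/Cauchy--Schwarz estimate combined with the Lipschitz condition and $\int_0^t\psi(s)\,dK(s)\le\psi(t)$ gives $\psi(t)\le C\int_0^t\psi(s)\,ds$; since $\psi$ is bounded on $[0,1]$, Gronwall's lemma yields $\psi\equiv0$, and letting $m\to\infty$ shows that $\xi$ and $\widetilde\xi$ are indistinguishable. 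Together with the pasting argument this completes the proof.

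The step I expect to be the main obstacle is the careful treatment of the Stieltjes integrator $K$ inside the iteration and uniqueness estimates: the Lipschitz and growth bounds must be applied in the right order---first replacing $|x_s-y_s|^2$ by $\sup_{r\le s}|x_r-y_r|^2$ and taking expectations, then using $K(t)<1$ to absorb the $dK$-integral into a plain Lebesgue integral---so that the iteration constant stays finite, and since $K$ may be discontinuous all the Fubini and monotone-convergence steps have to be justified for a right-continuous non-decreasing integrator. The reduction from bounded $\eta$ to merely a.s.-finite $\eta$ is routine but requires the short consistency check that the pasted process does not depend on the truncation level.
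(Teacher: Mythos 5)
This lemma is quoted in the paper as Theorem 4.6 of Liptser--Shiryaev and is stated without proof, so there is no in-paper argument to compare against; your Picard successive-approximation construction with the Doob/It\^o-isometry estimates, the absorption of the $dK$-integral via $\int_0^t\sup_{r\le s}|\Delta_r|^2\,dK(s)\le\sup_{r\le t}|\Delta_r|^2$, the Gronwall uniqueness, and the truncation of $\eta$ is precisely the standard proof given in that reference. The argument is correct as outlined.
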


Let $\ell: R \rightarrow R_+$ be a strictly positive and continuous function. We say
that $\ell$ belongs to the class $L$ if it satisfies: for all $-\infty<a\leq0\leq b<+\infty$, the following ODEs
\begin{equation}\label{ODE1}
L_t=a-\int_{t}^{T}\ell(L_s)ds, \quad U_t=b+\int_{t}^{T}\ell(U_s)ds
\end{equation}
have global bounded solutions on $[0,T]$.

\begin{lemma}\label{Lea}(\cite{Lepeltier97})
 $\ell\in L$ if and only if
$$\int_{-\infty}^{0}\frac{dx}{\ell(x)}=\int_{0}^{\infty}\frac{dx}{\ell(x)}=\infty.$$
Moreover, when $\ell\in L$, the equations presented in \eqref{ODE1} have unique solutions.
\end{lemma}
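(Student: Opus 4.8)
The plan is to differentiate the integral equations in \eqref{ODE1} and reduce everything to the two autonomous scalar ODEs $\dot L_t=\ell(L_t)$ with $L_T=a$ and $\dot U_t=-\ell(U_t)$ with $U_T=b$, which I would then solve by separation of variables. First I would record the qualitative picture forced by $\ell>0$: along any solution $t\mapsto L_t$ is strictly increasing and $t\mapsto U_t$ strictly decreasing, so $L_t\le a\le 0$ and $U_t\ge b\ge 0$ for $t\in[0,T]$; since $[0,T]$ is compact, having a ``global bounded solution on $[0,T]$'' is the same as having a solution defined on all of $[0,T]$, and the only obstruction is $L_t\to-\infty$ or $U_t\to+\infty$ at some time in $[0,T]$.

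The key step is the change of variables. I would set $\Phi(x)=\int_x^{0}\frac{ds}{\ell(s)}$ for $x\le 0$ and $\Psi(x)=\int_0^{x}\frac{ds}{\ell(s)}$ for $x\ge 0$; continuity and strict positivity of $\ell$ make $\Phi$ a strictly decreasing homeomorphism of $(-\infty,0]$ onto $[0,\,\int_{-\infty}^{0}\frac{ds}{\ell(s)})$ and $\Psi$ a strictly increasing homeomorphism of $[0,\infty)$ onto $[0,\,\int_{0}^{\infty}\frac{ds}{\ell(s)})$, where the right endpoints may equal $+\infty$. Along any solution one computes $\frac{d}{dt}\Phi(L_t)=\Phi'(L_t)\dot L_t=-1$ and likewise $\frac{d}{dt}\Psi(U_t)=-1$, so integrating from $t$ to $T$ gives the explicit representations
\begin{equation*}
L_t=\Phi^{-1}\big(\Phi(a)+T-t\big),\qquad U_t=\Psi^{-1}\big(\Psi(b)+T-t\big).
\end{equation*}
Since $\Phi$ and $\Psi$ are strictly monotone, every solution of \eqref{ODE1} must coincide with these formulas, which already proves the uniqueness assertion; moreover a solution defined on all of $[0,T]$ exists if and only if $\Phi(a)+T$ and $\Psi(b)+T$ lie in the ranges of $\Phi$ and $\Psi$, that is, if and only if
\begin{equation*}
T<\int_{-\infty}^{0}\frac{ds}{\ell(s)}-\Phi(a)=\int_{-\infty}^{a}\frac{ds}{\ell(s)}\qquad\text{and}\qquad T<\int_{0}^{\infty}\frac{ds}{\ell(s)}-\Psi(b)=\int_{b}^{\infty}\frac{ds}{\ell(s)}.
\end{equation*}

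With this in hand the equivalence is immediate. If both integrals diverge, then $\int_{-\infty}^{a}\frac{ds}{\ell(s)}=\int_{b}^{\infty}\frac{ds}{\ell(s)}=+\infty>T$ for every admissible $a\le 0\le b$, so the displayed formulas give bounded solutions on all of $[0,T]$ and $\ell\in L$. Conversely, suppose $\ell\in L$ but, say, $c:=\int_{-\infty}^{0}\frac{ds}{\ell(s)}<\infty$; then $\int_{-\infty}^{a}\frac{ds}{\ell(s)}=c-\Phi(a)\downarrow 0$ as $a\to-\infty$, so for $a$ sufficiently negative the inequality $T<\int_{-\infty}^{a}\frac{ds}{\ell(s)}$ fails, and for that $a$ the $L$-equation reaches $-\infty$ strictly before time $0$, contradicting $\ell\in L$; the case $\int_{0}^{\infty}\frac{ds}{\ell(s)}<\infty$ is symmetric. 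The point that most deserves care is precisely this last step: for a fixed horizon $T$ one might fear that a finite integral is still compatible with $\ell\in L$, but the definition of the class $L$ quantifies over \emph{all} terminal data, and it is exactly the freedom to send $a\to-\infty$ (equivalently $b\to+\infty$) that forces the escape to infinity; the only other mildly non-routine ingredient is that uniqueness must be established for the merely continuous, not Lipschitz, coefficient $\ell$, which is supplied by the strict monotonicity of $\Phi$ and $\Psi$.
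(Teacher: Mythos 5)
Your argument is correct. Note that the paper itself gives no proof of this lemma --- it is imported verbatim from Lepeltier and San Mart\'{\i}n \cite{Lepeltier97} --- so there is no in-text argument to compare against; what you have written is a valid self-contained substitute. The reduction of \eqref{ODE1} to $\dot L_t=\ell(L_t)$, $L_T=a$ and $\dot U_t=-\ell(U_t)$, $U_T=b$ is right, the monotonicity observation correctly confines $L_t$ to $(-\infty,0]$ and $U_t$ to $[b,\infty)$ so that $\Phi$ and $\Psi$ are defined along the trajectories, and the identity $\frac{d}{dt}\Phi(L_t)=-1$ (legitimate because $\ell>0$ and continuous makes $\Phi$ a $C^1$ strictly decreasing bijection onto $[0,\int_{-\infty}^0 ds/\ell(s))$) gives both the explicit representation and, since there are no zeros of $\ell$, uniqueness without any Lipschitz hypothesis. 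The blow-up criterion $T<\int_{-\infty}^{a}ds/\ell(s)$ and $T<\int_{b}^{\infty}ds/\ell(s)$ is the correct sharp condition, and you are right that the crux of the converse is the quantification over all terminal data $a\le 0\le b$: sending $a\to-\infty$ (resp.\ $b\to+\infty$) drives the remaining mass $\int_{-\infty}^{a}ds/\ell(s)$ to zero whenever the full integral is finite, which defeats any fixed horizon $T$. No gaps.
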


\begin{lemma}\label{Leb}(\cite{Lepeltier97})
Assume that $\xi \in L^{\infty}(\Omega,\mathcal{F}_T,P)$ and $f$ is a $\mathcal{P}\bigotimes \mathcal{B}(R^{d+1})$ measurable function such that  $f(t,\omega,\cdot,\cdot)$ is continuous for all $t$ and $\omega$, and there exists some finite constant $C$ such that
$$|f(t,\omega,y,z)|\leq \ell(y)+C|z|^2$$
for all $t$, $\omega$, $y$ and $z$. If $\ell \in L$, then the BSDE (backward stochastic differential equation)
\begin{equation}\label{BSDE}
Y_t=\xi+\int_{t}^{T}f(s,\omega,Y_s,\Lambda_s)ds-\int_{t}^{T}\Lambda_sdW_s
\end{equation}
has a maximal bounded solution $(Y,\Lambda)$. Moreover, $Y$ is a continuous process and $L_0\leq L_t\leq Y_t(\omega) \leq U_t \leq U_0$ holds for all $t$ and $\omega$, where $(L,U)$ are the unique solutions of \eqref{ODE1} with $b=\|\xi\|_\infty$ and $a=-\|\xi\|_\infty$.

\end{lemma}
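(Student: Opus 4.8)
Since the generator $f$ in \eqref{BSDE} is only continuous in $(y,z)$ and grows quadratically in $z$, the natural plan is the approximation-plus-stability method for quadratic BSDEs, following \cite{Lepeltier97}; the sole new feature is that the one-sided growth of $f$ is governed by the class-$L$ function $\ell$, which is precisely what Lemma \ref{Lea} and the ODEs \eqref{ODE1} are there to control. Throughout, put $b=\|\xi\|_\infty$, $a=-b$, and let $(L,U)$ be the bounded solutions of \eqref{ODE1} furnished by Lemma \ref{Lea}. First I would build a monotone family of Lipschitz generators approximating $f$ from above: after truncating the $z$-dependence at level $k$ (so that the truncated generator is bounded in $z$ and still dominated by $\ell(y)+C|z|^2$), one sup-convolves in $(y,z)$ with parameter $n$, where some care with the truncations is needed so that the sup-convolutions stay finite despite the possibly superlinear growth of $\ell$ and so that the quadratic-in-$z$ control is preserved up to the truncation level. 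This yields generators $f^{k,n}$ that are Lipschitz in $(y,z)$ and decrease to $f$ as $n\to\infty$ and then $k\to\infty$; for each $(k,n)$ the Pardoux--Peng theorem produces a unique square-integrable solution $(Y^{k,n},\Lambda^{k,n})$ of the BSDE with datum $\xi$ and generator $f^{k,n}$, and the Lipschitz comparison theorem makes $Y^{k,n}$ monotone in the parameters.

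The heart of the argument is a uniform $L^\infty$ bound, and this is where $\ell\in L$ enters. Applying It\^o's formula to $\phi(Y^{k,n}_t)$ with $\phi(y)=(e^{2Cy}-1)/(2C)$, the convexity of $\phi$ is calibrated so that the second-order term $\tfrac12\phi''|\Lambda^{k,n}|^2$ cancels exactly the quadratic term $C|\Lambda^{k,n}|^2$ permitted by the growth bound; what remains is a BSDE for $\phi(Y^{k,n})$ whose generator no longer involves the martingale integrand and is one-sidedly dominated by $\psi(v):=\phi'\!\big(\phi^{-1}(v)\big)\,\ell\!\big(\phi^{-1}(v)\big)$. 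The substitution $v=\phi(y)$ shows $\int^{\pm\infty}dv/\psi(v)=\int^{\pm\infty}dy/\ell(y)=\infty$, so $\psi\in L$ as well, and comparing the transformed equation with the genuine ODEs \eqref{ODE1} through Lemma \ref{Lea} gives $L_t\le Y^{k,n}_t\le U_t$ uniformly in $k,n$; in particular the $y$-truncation is inactive along the solutions.

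With the $L^\infty$ bound in hand, the $Y^{k,n}$ converge monotonically to a bounded limit $Y$ with $L_t\le Y_t\le U_t$, and It\^o's formula for $(Y^{k,n})^2$ together with this bound gives a uniform estimate of $\Lambda^{k,n}$ in $H^2$. The step I expect to be the main obstacle is passing to the limit in the nonlinear term $f^{k,n}(s,Y^{k,n}_s,\Lambda^{k,n}_s)$: because of the quadratic growth, weak $H^2$-convergence of the integrands does not suffice, and one must upgrade to strong convergence. This is achieved by the monotone-stability argument of \cite{Lepeltier97}: applying It\^o to a suitably chosen convex function of $Y^{k,n}-Y^{k,m}$, the uniform $L^\infty$ bounds and the local uniform convergence $f^{k,n}\to f$ force $\Lambda^{k,n}\to\Lambda$ in $H^2$ along the diagonal, whence $(Y,\Lambda)$ solves \eqref{BSDE} with $Y$ continuous and $L_0\le L_t\le Y_t\le U_t\le U_0$.

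Finally, maximality. If $(Y',\Lambda')$ is any bounded solution of \eqref{BSDE}, then $f\le f^{k,n}$ and, $Y'$ being bounded, $\Lambda'$ is a bona fide $\mathrm{BMO}$ integrand, so the Lipschitz comparison theorem applied to the BSDE with generator $f^{k,n}$ gives $Y'_t\le Y^{k,n}_t$ for all $t$; letting $n\to\infty$ and then $k\to\infty$ yields $Y'_t\le Y_t$. Hence $Y$ is the maximal bounded solution, and it inherits the enclosure $L_t\le Y_t\le U_t$ established above, which is the assertion of the lemma.
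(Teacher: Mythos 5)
The paper gives no proof of this lemma at all---it is quoted as a black-box result from Lepeltier and San Martin \cite{Lepeltier97}---so the only meaningful benchmark is that reference, and your sketch reconstructs its argument faithfully and correctly at the level of detail offered: the exponential transform $\phi(y)=(e^{2Cy}-1)/(2C)$ calibrated so that $\tfrac12\phi''\,|\Lambda|^2$ absorbs the $C|z|^2$ growth, the invariance of the class $L$ under the substitution $v=\phi(y)$ (so that Lemma \ref{Lea} supplies the enclosing ODE solutions and the uniform bound $L_t\le Y^{k,n}_t\le U_t$), monotone approximation from above by Lipschitz generators, a monotone-stability passage to the limit to recover strong convergence of the integrands, and comparison against an arbitrary bounded solution for maximality. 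The only sketch-level simplifications worth noting are that the single increasing transform $\phi$ maps $\mathbb{R}$ onto $(-\tfrac{1}{2C},\infty)$ and hence only delivers the upper barrier $U_t$, with the lower barrier $L_t$ requiring the mirror transform $(1-e^{-2Cy})/(2C)$, and that the double-indexed family $f^{k,n}$ must be arranged into a single monotone sequence before the monotone-stability theorem applies---both standard repairs that do not affect the correctness of the approach.
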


\begin{lemma}\label{Lec}(\cite{Lepeltier97})
Assume that $f$ and $h$ are two $\mathcal{P}\bigotimes \mathcal{B}(R^{d+1})$ measurable functions such that
\begin{itemize}
\item[(i)] for all $t$, $\omega$; $f(t,\omega,\cdot,\cdot)$ is continuous;
\item[(ii)] for all $t$, $\omega$, $y$, $z$; $f(t,\omega,y,z)\geq h(t,\omega,y,z)$;
\item[(iii)] there exists a constant $C>0$ such that, for all $t$, $\omega$, $y$, $z$; $|f(t,\omega,y,z)|<\ell(y)+C|z|^2$ where $\ell \in L$;
\item[(iv)] $\xi$, $\eta$ $\in$ $L^{\infty}(\Omega,\mathcal{F}_T,P)$ and $\xi \leq \eta$ $P-a.s.$
\end{itemize}
If $(X,Z)$ is a maximal bounded solution of \eqref{BSDE} with terminal value $\xi$ and coefficient $f$,  and $(Y,\Gamma)$ is a bounded solution of \eqref{BSDE} with terminal value $\eta$ and coefficient $h$, then $Y\leq X$.
\end{lemma}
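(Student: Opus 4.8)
\emph{Proof proposal.} The plan is to reduce the comparison to the classical comparison theorem for BSDEs with globally Lipschitz coefficients, by approximating the quadratic coefficient $f$ from above by a decreasing sequence of Lipschitz coefficients $f_n$ and then passing to the limit; the a priori bounds provided by Lemma~\ref{Leb} are exactly what make both the localization and the limiting argument go through. First I would use Lemma~\ref{Leb}: since $(Y,\Gamma)$ and $(X,Z)$ are bounded solutions with terminal values $\eta$ and $\xi$, both $X$ and $Y$ take values in a common compact interval $[-R,R]$ with $R$ depending only on $\|\xi\|_\infty$ and $\|\eta\|_\infty$, on which $M:=\sup_{|y|\le R}\ell(y)<\infty$. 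Composing $f$ and $h$ with a smooth truncation of the $y$-variable onto $[-R,R]$ alters neither the equation solved by $X$ nor that solved by $Y$, so I may assume from the start that $f$ (and hence $h\le f$) is bounded in $y$ and satisfies $|f(t,\omega,y,z)|\le M+C|z|^2$ uniformly in $(t,\omega,y)$. Then, for each $n\ge 1$, I introduce the sup-convolution
\begin{equation*}
f_n(t,\omega,y,z)=\sup_{(y',z')}\bigl\{f(t,\omega,y',z')-n|y-y'|-n|z-z'|\bigr\},
\end{equation*}
which, using the boundedness in $y$ and the quadratic growth in $z$, is finite, globally $n$-Lipschitz in $(y,z)$, still dominated by $M'+C|z|^2$ with $M'$ independent of $n$, non-increasing in $n$, bounded below by $f$ and hence by $h$, and, by the continuity of $f$ in (i), satisfies $f_n\downarrow f$ pointwise.

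Next, let $(X^n,Z^n)$ be the unique bounded solution of the Lipschitz BSDE with terminal value $\xi$ and coefficient $f_n$. Since $f_n\ge f_{n+1}\ge f$, the classical comparison theorem gives that $n\mapsto X^n$ is non-increasing and that $X^n\ge\widetilde X$ for every bounded solution $\widetilde X$ of \eqref{BSDE} with data $(\xi,f)$; using in addition the ordering of the coefficients and of the terminal data in (ii) and (iv), the same theorem applied to $(X^n,Z^n)$ and $(Y,\Gamma)$ yields $X^n\ge Y$ for all $n$. Arguing as in the proof of Lemma~\ref{Leb} --- uniform $L^\infty$ bounds, the $L^2$ estimate for $Z^n$ obtained from the quadratic structure via an exponential change of variable, and stability of the BSDE under $f_n\to f$ --- the decreasing pointwise limit $X^\infty:=\lim_n X^n$ is itself a bounded solution of \eqref{BSDE} with data $(\xi,f)$; being a solution that dominates every other such solution, it is the maximal one, hence $X^\infty=X$. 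Passing to the limit in $X^n\ge Y$ then gives $X\ge Y$, which is the assertion.

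The routine ingredient here is the classical Lipschitz comparison theorem, invoked twice above. The real work, and the place where hypothesis (iii) together with $\ell\in L$ (via Lemmas~\ref{Lea} and~\ref{Leb}) is indispensable, lies in making the approximation quantitative and uniform: verifying that $f_n$ retains a bound $M'+C|z|^2$ with constants independent of $n$, deriving the uniform estimates on $Z^n$ that are needed to pass to the limit, and checking that $X^\infty$ genuinely solves the limiting quadratic BSDE, so that it may be identified with the given maximal solution $X$ rather than with some spurious limit. Once these estimates are in place, the comparison $X^n\ge Y$ transfers to the limit with no further difficulty.
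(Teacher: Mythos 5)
This lemma is imported from \cite{Lepeltier97} and the paper gives no proof of it, so there is nothing internal to compare against; your overall strategy (approximate $f$ from above by a decreasing sequence of Lipschitz generators, apply the classical comparison theorem at each level, identify the decreasing limit with the maximal solution, and pass to the limit in $X^n\ge Y$) is indeed the strategy of the cited reference. However, as written the argument has two genuine problems.

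First, the approximating sequence is not well defined. Under hypothesis (iii) the generator may genuinely grow like $C|z|^2$ from above (take $f(t,\omega,y,z)=\tfrac{C}{2}|z|^2$, which satisfies $|f|<\ell(y)+C|z|^2$). For such an $f$ the sup-convolution $\sup_{z'}\bigl\{f(t,\omega,y',z')-n|z-z'|\bigr\}$ is identically $+\infty$, because a quadratic cannot be dominated by a linear penalty; your claim that finiteness follows from ``the quadratic growth in $z$'' is exactly backwards --- quadratic growth in $z$ is what destroys finiteness of a Lipschitz sup-convolution in $z$. Truncating $y$ to $[-R,R]$ does not help, since the problem is in the $z$-variable. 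The standard repairs are either to restrict the supremum to $|z'|\le k_n$ and then control the truncation using uniform (e.g.\ BMO-type) bounds on the $Z^n$, or --- as Lepeltier and San Mart\'in actually do --- to first perform an exponential change of variables $\widetilde Y=\varphi(Y)$ with $\varphi''=2C\varphi'$ that absorbs the quadratic term, after which the inf/sup-convolution is taken in $y$ only. Without one of these devices the construction of $f_n$, and hence of $X^n$, collapses at the first step.

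Second, the step ``using the ordering of the terminal data in (iv)\dots yields $X^n\ge Y$'' cannot be correct as stated, because (iv) gives $X^n_T=\xi\le\eta=Y_T$, i.e.\ the terminal values are ordered the \emph{wrong} way for the conclusion $X\ge Y$; the classical comparison theorem requires the dominating solution to have the larger terminal value. Indeed the lemma as printed is false unless $\xi=\eta$ (take $f=h=0$, $\xi=0$, $\eta=1$: then $X\equiv 0<1\equiv Y$). The inequality in (iv) should read $\xi\ge\eta$ (this is how the comparison theorem appears in \cite{Lepeltier97}), and your proof silently uses that corrected version while citing the printed one. You should flag the inconsistency rather than invoke (iv) in support of an inequality it contradicts.
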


\begin{definition}(\cite{Mandelbrot})\label{fracB}
  A fractional Brownian motion $B_t^H$ with Hurst index $H$ is a centered Gaussian process such that its covariance function $R(t,s)=E[B_t^HB_s^H]$ is given by
  $$R(t,s)=\frac{1}{2}(|t|^{2H}+|s|^{2H}-|t-s|^{2H}),$$
  where $0<H<1$.
\end{definition}

\begin{remark}
If $H=\frac{1}{2}$, $R(t,s)=\min(t,s)$ and $B_t^H$ is the usual standard Brownian motion.
If $H\neq\frac{1}{2}$, $B_t^H$ is neither a semimartingale nor a Markov process. It is a process of long memory in the following sense \cite{Shiryaev99}: If $\rho_n=E[B_1^H(B_{n+1}^H-B_n^H)]$, then the series $\sum_{n=0}^{\infty}\rho_n$ is either divergent or convergent with very late rate.
\end{remark}

\section{Equilibrium price and optimal strategy}

We start by presenting several fundamental assumptions made in our model. For a certain stock price process $P_t$, there is an insider who is risk-neutral and knows its terminal stock value $v$. The insider tries to maximize the expectation of his/her terminal profit as follows:
\begin{equation}\label{max}
J_t=\max_{\{\theta_s\}_{s\geq t}\in \mathcal{A}} \mathrm{E}\left[\left.\int_{t}^{T}(v-P_s)\theta_s ds \right| \mathcal{F}^Y_t,v\right].
\end{equation}
Here, following \cite{Back98}, we assume that the insider chooses an absolutely continuous trading rule $\theta$ which belongs to an admissible set
\begin{equation}\label{e2}
\mathcal{A}=\left\{\theta\left|\mathrm{E}\left[\int_{0}^{T}\theta_s^2 ds\right]< \infty\right.\right\}.
\end{equation}
Moreover, the aggregate order flow arrival $Y_t$ (caused by the insider's demand and the noise-trader's demand) is assumed to follow
\begin{equation}\label{YYY}
dY_t=\theta_tdt+\sigma_t dB^{H}_t,
\end{equation}
where $B^{H}_t$ is a fractional Brownian motion independent of $v$, which implies that the volume has remarkably memory characteristics, $H$ is restricted within $(\frac{1}{2}, 1]$ as it is of interest in finance to investigate the effect of the long-range dependence exhibited by the volatility, $\mathcal{F}^Y_t$ is the filtration generated by the historic information of the aggregate order flow $Y^t=\{Y_u\}_{u\leq t}$. Another assumption we made here is that that $\sigma_t$ follows
\begin{equation}\label{MMT}
d\sigma_t=m(t,\sigma^t)\sigma_tdt+\nu(t,\sigma^t)\sigma_tdM_t,
\end{equation}
where $M_t$ is a martingale, the long term rate $m(t,\sigma^t)$ and volatility $\nu(t,\sigma^t)$ are dependent on the past history of the volatility $\sigma^t$, instead of depending on the history of $Y$.

On the other hand, the market maker is also risk-neutral, and has a prior information that $v$ follows the normal distribution $N(P_0,\Sigma_0)$, instead of having the knowledge of the terminal value $v$. The market maker absorbs the total order flow by trading against it at a price which is determined to achieve break even on average. As the market maker is risk-neural, equilibrium break-even requires that the market clearing price is
\begin{equation}\label{Pt}
P_t=\mathrm{E}\left[v|\mathcal{F}^Y_t\right].
\end{equation}

Finally, both the market maker and the insider are assumed to perfectly observe the history of $\sigma$, which indicates that the filtration $\mathcal{F}^Y_t$ here contains both history of the order flow $Y^t$ and volatility $\sigma^t$. This is different from what is assumed in the Kyle model, where only equilibrium prices are observable, since the insider there is able to recover the total order flow, while only observing prices are not sufficient for the recovery of the noise trading volatility, as the volatility of the uninformed order flow in our model is no longer constant, but a stochastic variable.

It should be remarked here that if the uncertain term were general Brownian motion without memories, then the model would degenerate to the one studied by Collin-Duersne and Fos \cite{Collin16}. Furthermore, if the noise trading $\sigma_t$ were constant, then the model would become exactly the same as the classical Kyle-Back model \cite{Kyle85}.

\subsection{Equilibrium price process}
\begin{definition}\label{def1}
An equilibrium occurs when $(P_T,\theta_t)$ satisfy the condition \eqref{Pt} while solving the insider's optimal problem \eqref{max}.
\end{definition}

To solve this equilibrium, we proceed in a three-step process:  a) the stock price dynamic is presented, being consistent with the market maker's risk-neutral filtration, which is conditional on a conjectured strategy rule followed by the insider; b) the insider's optimal problem \eqref{max} is solved under the assumed dynamic \eqref{Pt}; c) solving the conjecture rule \eqref{Pt} is shown to be consistent with the optimal solution of \eqref{max}, which shows that we have reached the equilibrium defined in Definition \ref{def1}.

We start by presenting a few lemmas, which will be used when deriving the optimal trading strategy of the insider. Lemma \ref{Ltheta} below shows that the change of the stock price process $P_t$ is linear in the order flow, if the insider takes a trading strategy that is linear in his/her profit with memory.

\begin{lemma}\label{Ltheta}
If the insider adopts the following memorable trading strategy
\begin{equation}\label{stra}
\theta_t=\beta_t(v-P_t)-(H-\frac{1}{2})\psi_t
\end{equation}
for some $\mathcal{F}_t^Y$ adapted process $\beta_t$ with
$$\psi(t)=\int_{0}^{t}(t-u+\varepsilon)^{H-\frac{3}{2}}dW_u$$
for any $\varepsilon>0$, then the stock price defined in \eqref{Pt} follows
\begin{equation}\label{Pt2}
dP_t=\lambda_tdY_t,
\end{equation}
where $\lambda_t$ is the price impact defined by
\begin{equation}\label{lambdat}
\lambda_t=\frac{\beta_t\Sigma_t}{\varepsilon^{2H-1} \sigma^2(t)},
\end{equation}
and $\Sigma_t$ is the following conditional variance
\begin{equation}\label{Sigma}
\Sigma_t=\mathrm{E}\left[\left.(v-P_t)^2\right|\mathcal{F}_t^Y\right].
\end{equation}
Moreover, the dynamic of $\Sigma_t$ can be formulated as follows:
\begin{equation}\label{Sigma9}
d\Sigma_t=-\lambda_t^2\varepsilon^{2H-1}\sigma_t^2dt.
\end{equation}
\end{lemma}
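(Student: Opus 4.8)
The plan is to remove the long‑memory noise from the order flow by passing to the standard Brownian motion $W$ underlying $B^H$, so that \eqref{YYY} becomes a classical It\^o observation equation, and then to run the market maker's Kalman--Bucy filter. First I would express the fractional noise through $W$: with the (regularised Riemann--Liouville) representation of $B^H$ one has $\sigma_t\,dB^H_t=\varepsilon^{H-1/2}\sigma_t\,dW_t+R_t\,dt$, where the finite‑variation term $R_t\,dt$ is, by the very definition of $\psi_t=\int_0^t(t-u+\varepsilon)^{H-3/2}\,dW_u$, of the form $(H-\tfrac12)\psi_t\,dt$ (up to the placement of $\sigma_t$ and a normalising constant); note $H-\tfrac32\in(-1,-\tfrac12]$, so the kernel stays bounded away from its singularity because $\varepsilon>0$ and $\psi$ is a genuine Gaussian process. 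Plugging this representation and the conjectured rule \eqref{stra} into \eqref{YYY}, the fractional finite‑variation term is exactly offset by the $-(H-\tfrac12)\psi_t$ piece of $\theta_t$, and \eqref{YYY} reduces to the It\^o process
\[
dY_t=\beta_t(v-P_t)\,dt+\varepsilon^{H-1/2}\sigma_t\,dW_t ,
\]
whose diffusion coefficient is $\mathcal{F}^Y_t$‑adapted (a functional of $\sigma^t$) and whose drift is affine in the single unknown $v$ with $\mathcal{F}^Y_t$‑adapted coefficients. Well‑posedness of this equation, and of the coupled equation for $P$ derived below, follows from Lemma \ref{sde1}.

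Next I would form the innovations process for the observation filtration $\mathcal{F}^Y_t$. Since $P_t=\mathrm E[v\mid\mathcal{F}^Y_t]$ and $\beta_t$ is $\mathcal{F}^Y_t$‑adapted, $\mathrm E[\beta_t(v-P_t)\mid\mathcal{F}^Y_t]=\beta_t(P_t-P_t)=0$, so the $\mathcal{F}^Y_t$‑compensator of $Y$ vanishes and $\widehat W_t:=\int_0^t(\varepsilon^{H-1/2}\sigma_s)^{-1}\,dY_s$ is an $\mathcal{F}^Y_t$‑Brownian motion; equivalently the innovation coincides with $Y$ itself. Because the prior law of $v$ is Gaussian, $v\sim N(P_0,\Sigma_0)$, and the drift of $Y$ is affine in $v$ with $\mathcal{F}^Y_t$‑adapted coefficients, the conditional law of $v$ given $\mathcal{F}^Y_t$ stays Gaussian (conditionally on the observed path of $\sigma$), so the conditionally Gaussian filtering equations of \cite{Liptser01} apply. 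Their gain is $\mathrm{Cov}(v,\text{signal})/(\text{noise variance rate})=\beta_t\Sigma_t/(\varepsilon^{2H-1}\sigma_t^2)$, which gives $dP_t=\lambda_t\,dY_t$ as in \eqref{Pt2}, with $\lambda_t$ as in \eqref{lambdat} and $\Sigma_t$ the posterior variance \eqref{Sigma}.

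Finally I would derive the dynamics of $\Sigma_t$. From $dP_t=\lambda_t\,dY_t=\lambda_t\beta_t(v-P_t)\,dt+\lambda_t\varepsilon^{H-1/2}\sigma_t\,dW_t$ we obtain $d(v-P_t)=-\lambda_t\beta_t(v-P_t)\,dt-\lambda_t\varepsilon^{H-1/2}\sigma_t\,dW_t$; applying It\^o's formula to $(v-P_t)^2$, using $d\langle v-P\rangle_t=\lambda_t^2\varepsilon^{2H-1}\sigma_t^2\,dt$, and taking $\mathrm E[\,\cdot\mid\mathcal{F}^Y_t]$ to remove the martingale term, we get $d\Sigma_t=-2\lambda_t\beta_t\Sigma_t\,dt+\lambda_t^2\varepsilon^{2H-1}\sigma_t^2\,dt$. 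Since $\lambda_t\beta_t\Sigma_t=\lambda_t^2\varepsilon^{2H-1}\sigma_t^2$ is immediate from \eqref{lambdat}, this collapses to $d\Sigma_t=-\lambda_t^2\varepsilon^{2H-1}\sigma_t^2\,dt$, which is \eqref{Sigma9}. That $\Sigma_t$ carries no $dW_t$ term is exactly the conditional‑Gaussianity assertion: given the observed trajectory of $\sigma$, the posterior variance is deterministic.

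The main obstacle, I expect, is making the reduction rigorous: one has to pin down the precise representation of $\sigma_t\,dB^H_t$ in terms of $dW_t$ (in particular where $\sigma_t$ enters the finite‑variation term and any normalising constant, which must match the $\psi_t$ used in \eqref{stra}), verify that the coefficients of the resulting $(Y,P,\Sigma)$ system meet the Lipschitz and growth hypotheses behind Lemma \ref{sde1} and behind the conditionally Gaussian filtering theorems, and confirm that $\mathcal{F}^Y_t$ -- generated by the order flow and enlarged by $\sigma^t$ -- is rich enough for the innovations representation to be valid in this non‑Markovian environment. Once that reduction is in hand, the algebra yielding \eqref{lambdat} and \eqref{Sigma9} is routine.
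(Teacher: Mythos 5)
Your proposal follows essentially the same route as the paper: decompose the regularised fractional noise as $dB^{\varepsilon,H}_t=(H-\tfrac12)\psi_t\,dt+\varepsilon^{H-1/2}\,dW_t$, let the $-(H-\tfrac12)\psi_t$ term of the conjectured strategy absorb the finite-variation part, and read off the gain $\beta_t\Sigma_t/(\varepsilon^{2H-1}\sigma_t^2)$ from the Gaussian projection (Kalman--Bucy) theorem of Liptser--Shiryaev, exactly as the authors do. The only cosmetic difference is that you derive \eqref{Sigma9} via It\^o's formula on $(v-P_t)^2$ and the tower property while the paper applies the projection formula for the conditional variance directly, and the $\sigma_t$-placement ambiguity you flag as the main obstacle is in fact present, unresolved, in the paper's own computation (which writes $\sigma_t\,dB^{\varepsilon,H}_t=(H-\tfrac12)\psi_t\,dt+\varepsilon^{H-1/2}\sigma_t\,dW_t$, omitting $\sigma_t$ on the drift term), so you are not missing anything the authors actually supply.
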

\begin{proof}
If the stochastic process $\psi_t$ is defined by
$$\psi_t=\int_{0}^{t}(t-u+\varepsilon)^{H-\frac{3}{2}}dW_u,$$
then the stochastic theorem of Fubini yields
\begin{eqnarray}
\int_0^t \psi_s ds&=&\int_{0}^{t}\int_{0}^{s}(s-u+\varepsilon)^{H-\frac{3}{2}}dW_uds\nonumber\\
&=&\int_{0}^{t}\left[\int_{s}^{u}(s-u+\varepsilon)^{H-\frac{3}{2}}ds \right]dW_u\nonumber\\
&=&\frac{1}{H-\frac{1}{2}}\left[\int_{0}^{t}(t-u+\varepsilon)^{H-\frac{1}{2}}dW_u- \varepsilon^{H-\frac{1}{2}} W_t \right].\nonumber
\end{eqnarray}
Define
$$B_t^{\varepsilon,H}=\int_{0}^{t}(t-s+\varepsilon)^{H-\frac{1}{2}}dW_s.$$
Then, based on the expression of the fractional Brownian motion
$$B_t^H=\int_{0}^{t}(t-s)^{H-\frac{1}{2}}dW_s,$$
we can obtain
\begin{equation}\label{e12}
\int_0^t \psi_s ds=\frac{1}{H-\frac{1}{2}}\left[B_t^{\varepsilon,H}-\varepsilon^{H-\frac{1}{2}} W_t\right].
\end{equation}
This leads to
$$B_t^{\varepsilon,H}=(H-\frac{1}{2})\int_{0}^{t}\psi_sds+\varepsilon^{H-\frac{1}{2}} W_t,$$
and thus $B_t^{\varepsilon,H}$ is a semimartingale.
Moreover, $B_t^{\varepsilon,H}$ uniformly converges to $B_t^H$ for $t\in [0,T]$ in $L^2(\Omega)$ when $\varepsilon$ tends to $0$. Therefore, we hereafter use $B_t^{\varepsilon,H}$ to approximate the original fractional Brownian motion $B_t^H$, which is a common practice in mathematical finance now \cite{Mrazek, Thao}.

Now, with the standard Gaussian projection theorem (Theorems 12.6 and 12.7 in \cite{Liptser01}), we can derive
\begin{eqnarray}
P_{t+dt}&=&\mathrm{E}[v|Y^t,Y_{t+dt},\sigma^t,\sigma_{t+da}]\nonumber\\
&=&\mathrm{E}[v|Y^t,\sigma^t]+\frac{\mathrm{Cov}(v,Y_{t+dt}-Y_t|Y^t,\sigma^t)}{\mathrm{V}(Y_{t+dt}-Y_t|Y^t,\sigma^t)}\nonumber\\
&& \mbox{}\times \left(Y_{t+dt}-Y_t-\mathrm{E}[Y_{t+dt}-Y_t|Y^t,\sigma^t]\right)\nonumber\\
&=&P_t+\frac{\mathrm{E}[(v-P_t)(\theta_t+(H-\frac{1}{2})\psi_t)|Y^t,\sigma^t]dt}{\varepsilon^{2H-1}\sigma^2_tdt}(Y_{t+dt}-Y_t)\nonumber\\
&=& P_t+\frac{\mathrm{E}[\beta_t(v-P_t)^2|Y^t,\sigma^t]dt}{\varepsilon^{2H-1}\sigma^2_tdt}(Y_{t+dt}-Y_t)\nonumber\\
&=&P_t+\frac{\beta_t\Sigma_t}{\varepsilon^{2H-1}\sigma^2_t}dY_t.
\end{eqnarray}
Here, the second equality uses the fact that $\sigma^t$ is independent of the asset value in the order flow, the third equality is obtained from the fact that the expected change in the order flow is zero for the conjectured policy $\theta_t$, and the last equality follows from \eqref{Sigma}.

Finally, applying the projection theorem yields
\begin{equation}
\mathrm{Var}[v|Y^t,Y_{t+dt},\sigma^t,\sigma_{t+da}]=\mathrm{Var}[v|Y^t,\sigma^t]-\left(\frac{\beta_t\Sigma_t}{\varepsilon^{H-\frac{1}{2}}\sigma_t}\right)^2\mathrm{Var}[Y_{t+dt}-Y_t|Y^t,\sigma^t],
\end{equation}
which leads to
\begin{equation}
\Sigma_{t+dt}=\Sigma_t-\varepsilon^{2H-1}\lambda_t^2\sigma_t^2,
\end{equation}
where $\lambda_t$ is defined by \eqref{lambdat}.
This completes the proof.
\end{proof}

As our model now exhibits long memory, we also need to introduce the new market depth process and derive the corresponding equilibrium price process, before we are able to find the optimal solution to the problem \eqref{max}.

\begin{lemma}\label{G001}
Define $G_t$ by setting $\lambda_t=\sqrt{\frac{\Sigma_t}{G_t}}$ and assume that the market depth (i.e. Kyle's lambda) process $\frac{1}{\lambda_t}$ is martingale. If $\Sigma_T=0$,  $\sigma_t$ is uniformly bounded above by $\overline{\sigma}$ and below by $\underline{\sigma}>0$, then there exists a unique bounded solution $G_t$ satisfying
\begin{equation}\label{bound1}
\underline{\sigma}^2\varepsilon^{2H-1}(T-t)\leq G_t \leq \overline{\sigma}^2\varepsilon^{2H-1}(T-t).
\end{equation}
\end{lemma}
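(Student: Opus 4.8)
The plan is to route everything through the process $N_t:=1/\lambda_t=\sqrt{G_t/\Sigma_t}$, which is a martingale by hypothesis, and to produce the two bounds in \eqref{bound1} by two complementary devices. First I would record two structural facts coming from Lemma \ref{Ltheta}. Since $N_t\lambda_t=1$ and, by \eqref{Sigma9}, $d\Sigma_t=-\lambda_t^2\varepsilon^{2H-1}\sigma_t^2\,dt$, the conditional variance $\Sigma$ is absolutely continuous; integrating on $[t,T]$ and using $\Sigma_T=0$ gives the pathwise identity $\Sigma_t=\int_t^T\lambda_s^2\varepsilon^{2H-1}\sigma_s^2\,ds$, and in particular $G_T=\Sigma_TN_T^2=0$, which is the terminal condition. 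For the lower bound, note that because $N$ is a martingale the process $\lambda_t^2=N_t^{-2}$ is a submartingale, so $\mathrm{E}[\lambda_s^2\mid\mathcal{F}_t^Y]\ge\lambda_t^2$ for $s\ge t$; moreover $\Sigma_t$ is $\mathcal{F}_t^Y$-measurable, so the identity above may be read as $\Sigma_t=\mathrm{E}[\int_t^T\lambda_s^2\varepsilon^{2H-1}\sigma_s^2\,ds\mid\mathcal{F}_t^Y]$. Bounding $\sigma_s\ge\underline\sigma$, exchanging conditional expectation with the time integral, and using the submartingale inequality yields $\Sigma_t\ge\underline\sigma^2\varepsilon^{2H-1}(T-t)\,\lambda_t^2$; dividing by $\lambda_t^2>0$ (valid for $t<T$, the bound being trivial at $t=T$) gives $G_t\ge\underline\sigma^2\varepsilon^{2H-1}(T-t)$.

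For the upper bound the submartingale estimate points the wrong way, so I would instead expand $G_t=\Sigma_tN_t^2$ by It\^o's formula. Since $\Sigma$ has finite variation there is no covariation term, and $N_t^2\,d\Sigma_t=-(N_t\lambda_t)^2\varepsilon^{2H-1}\sigma_t^2\,dt=-\varepsilon^{2H-1}\sigma_t^2\,dt$, so that $dG_t=-\varepsilon^{2H-1}\sigma_t^2\,dt+2\Sigma_tN_t\,dN_t+\Sigma_t\,d\langle N\rangle_t$. Integrating on $[t,T]$, using $G_T=0$, taking $\mathrm{E}[\,\cdot\mid\mathcal{F}_t^Y]$ (the stochastic integral being handled by a routine localisation, since $\Sigma$ is bounded by $\Sigma_0$), and discarding the nonnegative term $\mathrm{E}[\int_t^T\Sigma_s\,d\langle N\rangle_s\mid\mathcal{F}_t^Y]\ge0$, one obtains $G_t\le\mathrm{E}[\int_t^T\varepsilon^{2H-1}\sigma_s^2\,ds\mid\mathcal{F}_t^Y]\le\overline\sigma^2\varepsilon^{2H-1}(T-t)$. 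Together with the previous step this is exactly \eqref{bound1}, and it also gives the boundedness of $G$ (recall $\Sigma_0<\infty$ and $T<\infty$).

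For existence and uniqueness I would recast the same computation as a backward equation: after passing to a Brownian filtration (by Lemma \ref{strong0} together with a martingale-representation argument) and writing $dN_t=\zeta_t\,dW_t$, the process $G$ solves \eqref{BSDE} with $\xi=0$ and driver $f(t,\omega,y,z)=\varepsilon^{2H-1}\sigma_t^2-\frac{z^2}{4y}$. On each interval $[0,T-\delta]$ the lower bound $G_t\ge\underline\sigma^2\varepsilon^{2H-1}\delta>0$ makes $f$ satisfy $|f(t,\omega,y,z)|\le\ell(y)+C_\delta|z|^2$ with $\ell$ constant, hence $\ell\in L$ by Lemma \ref{Lea}; Lemma \ref{Leb} (applied, if necessary, to a truncation of $f$ away from $y=0$) then furnishes a bounded solution, and Lemma \ref{Lec}, compared against the deterministic solution $\overline\sigma^2\varepsilon^{2H-1}(T-t)$ of the backward ODE obtained by freezing $\sigma$ at $\overline\sigma$, re-delivers the upper bound; uniqueness in the bounded class follows, on $[0,T-\delta]$, from the comparison principle for quadratic backward equations ($f$ being there locally Lipschitz in $y$ with $z$-gradient of at most linear growth), and then by letting $\delta\downarrow0$ and using continuity of $G$ at $T$ with $G_T=0$. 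The one genuinely delicate point --- and the main obstacle --- is exactly this degeneracy at $t=T$: since $G_T=0$ the term $\frac{z^2}{4y}$ is singular there, so Lemma \ref{Leb} cannot be invoked on all of $[0,T]$ at once; one must argue on $[0,T-\delta]$, control there both the local-martingale part and the behaviour of $\lambda_t$ near maturity, and only then pass to the limit --- and it is for keeping this limit finite and well behaved that the $\varepsilon$-regularisation of the fractional Brownian motion introduced in Lemma \ref{Ltheta} is used.
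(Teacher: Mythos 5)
Your two bounds are obtained by a genuinely different, and in places more elementary, route than the paper's. For the lower bound you use only the martingale property of $1/\lambda$: Jensen makes $\lambda_t^{2}$ a submartingale, and combining this with the pathwise identity $\Sigma_t=\int_t^T\lambda_s^2\varepsilon^{2H-1}\sigma_s^2\,ds$ gives $G_t\ge\underline\sigma^2\varepsilon^{2H-1}(T-t)$ directly; the paper instead gets this from the comparison Lemma \ref{Lec} against the explicit solution $\varepsilon^{H-\frac12}\underline\sigma\sqrt{T-t}$ of a backward ODE with $\widetilde\Lambda\equiv0$. Your upper bound (It\^o on $\Sigma_tN_t^2$, then discard the nonnegative term $\int\Sigma_s\,d\langle N\rangle_s$) is essentially the computation the paper performs in the separate lemma following Lemma \ref{L44}; for \eqref{bound1} itself the paper again uses the $\ell\in L$ comparison machinery on the ODEs $U_t=-\int_t^T\ell(U_s)ds$. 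Both of your derivations of the bounds are correct.

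Where you diverge to your disadvantage is existence and uniqueness. The paper's key move is the substitution $y=\sqrt{G}$, which turns the equation into a BSDE with driver $f(t,y)=\varepsilon^{2H-1}\sigma_t^2/(2y)$ containing no $z$ at all; this driver is dominated by $\ell(y)=\varepsilon^{2H-1}\overline\sigma^2/(2|y|)$, which lies in the class $L$ of Lemma \ref{Lea} since $\int_0^\infty\frac{dx}{\ell(x)}=\int_0^\infty\frac{2x}{\varepsilon^{2H-1}\overline\sigma^2}\,dx=\infty$, so Lemma \ref{Leb} applies on all of $[0,T]$ at once, terminal degeneracy included. Working with $G$ itself you get the driver $\varepsilon^{2H-1}\sigma_t^2-\frac{z^2}{4y}$, whose quadratic-in-$z$ singularity at $y=0$ forces you onto $[0,T-\delta]$, and your passage $\delta\downarrow0$ for uniqueness has a real gap: two bounded solutions on $[0,T]$, restricted to $[0,T-\delta]$, carry \emph{different} terminal data $G^1_{T-\delta}\neq G^2_{T-\delta}$, so a comparison principle on $[0,T-\delta]$ does not identify them; you would need a quantitative stability estimate showing that the discrepancy propagated backward from $T-\delta$ vanishes as $\delta\to0$, which you do not supply. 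The paper's uniqueness argument sidesteps this entirely by linearizing the difference $\Delta=\sqrt{G^1}-\sqrt{G^2}$ and observing that $\exp\left(-\int_0^ta_s\,ds\right)\Delta_t$ is a bounded martingale vanishing at $T$. Finally, your closing remark that the $\varepsilon$-regularisation is what keeps the $\delta\downarrow0$ limit finite is off the mark: in this lemma $\varepsilon$ enters only through the multiplicative constant $\varepsilon^{2H-1}$ and plays no role in the degeneracy at maturity.
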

\begin{proof}
By setting $\mathrm{E}[d\frac{1}{\lambda_t}]=0$ and using \eqref{Sigma9},
one has
\begin{equation}
\mathrm{E}[dG_t]=-\frac{\varepsilon^{2H-1}\sigma_t^2}{2\sqrt{G_t}}dt
\end{equation}
with the terminal condition $G_T=0$, which implies that $G_t$ satisfies
\begin{equation}\label{GGT}
\sqrt{G_t}=\mathrm{E}\left[\left.\int_{t}^{T}\frac{\varepsilon^{2H-1}\sigma_s^2}{2\sqrt{G_s}}ds\right|\mathcal{F}_t^\sigma\right].
\end{equation}
Thus, $\sqrt{G_t}$ solves the BSDE
$$dy_t=-f(t,y_t)dt-\Lambda_tdM_t,\quad y_T=0,$$
where
$$
 f(t,y)=\frac{\varepsilon^{2H-1}\sigma_t^2}{2y}.
$$
If we define $\ell(y)=\frac{\varepsilon^{2H-1}\overline{\sigma}^2}{2|y|}$, then it is not difficult to find that $f(t,y_t)\leq \ell(y_t)$ for all $(t,\omega)$ and
$$
\int_{0}^{\infty}\frac{1}{\ell(x)}dx=\int_{-\infty}^{0}\frac{1}{\ell(x)}dx.
$$
As a result, according to Lemmas \ref{Lea} and \ref{Leb}, there exist
two solutions $L(t)$ and $U(t)$ solving $L_t=-\int_{t}^{T}\ell(L_s)ds$ and $U_t=-\int_{t}^{T}\ell(U_s)ds$, respectively,  such that $L_t \leq G_t \leq U_t$. Moreover, it is easy to derive that
$$
U_t^2=-L_t^2=\varepsilon^{2H-1}\overline{\sigma}^2(T-t),
$$
yielding the upper bound of $G_t$. On the other hand, if we consider the solution to the following backward equation
$$dx_t=-\frac{\varepsilon^{2H-1}\underline{\sigma}^2}{2x_t}dt-\widetilde{\Lambda}_tdM_t$$
with terminal condition $x_T=0$, then we can actually obtain
$$
x_t=\varepsilon^{H-\frac{1}{2}}\underline{\sigma}\sqrt{T-t}
$$
by setting $\widetilde{\Lambda}_t=0$. Considering
$$
f(t,y_t)\geq \frac{\varepsilon^{2H-1}\underline{\sigma}^2}{2y}, \quad \forall(t,\omega),
$$
the comparison result of Lemma \ref{Lec} leads to
$y_t\geq x_t$ which further yields the lower bound on the maximal solution for $G_t$.

If we define $g_t=\sqrt{G_t}$ and assume that there are two uniformly bounded solutions $g^1(t)$ and $g^2(t)$, then the difference between the two solutions, $\Delta_t=g_t^1-g_t^2$, satisfies
$$\Delta_t=\mathrm{E}\left[\left.\int_{t}^{T}-\Delta_s\frac{\varepsilon^{2H-1}\sigma_s^2}{2g_s^1g_s^2}ds\right|\mathcal{F}_t^\sigma\right].$$
If we denote
$$a_t=\frac{\varepsilon^{2H-1}\sigma_s^2}{2\sqrt{g_s^1g_s^2}},$$
then $a_t\ge 0$ and
$\exp\left(-\int_0^ta_sds\right)\Delta_t$
is a bounded continuous martingale. Therefore, we can finally reach the conclusion that $\Delta_t=0$ since $\Delta_T=g_T^1-g_T^2=0$. This completes the proof.
\end{proof}

\begin{lemma}\label{L43}
The price process $P_t$ driven by \eqref{Pt2} and \eqref{lambdat} converges almost surely to $v$ at time $T$.
\end{lemma}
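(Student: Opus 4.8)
The plan is to reduce the statement to showing that the posterior variance $\Sigma_t$ is driven to $0$ as $t\uparrow T$, using the explicit ODE it satisfies together with the two–sided estimate on $G_t$ from Lemma~\ref{G001}. Note first that $P_t=\mathrm{E}[v\,|\,\mathcal{F}^Y_t]$ is a martingale on $[0,T)$ which is bounded in $L^2$, since $\mathrm{E}[P_t^2]\le\mathrm{E}[v^2]=P_0^2+\Sigma_0<\infty$ by the Gaussian prior. Hence, by the martingale convergence theorem, $P_t$ converges almost surely (and in $L^2$) to some limiting random variable $P_{T^-}$ as $t\uparrow T$; it therefore suffices to identify $P_{T^-}=v$, and for this it is enough to prove $\mathrm{E}[(v-P_t)^2]=\mathrm{E}[\Sigma_t]\to 0$, because then $P_t\to v$ in $L^2$ and the $L^2$ limit must coincide (along a subsequence, hence everywhere) with the almost-sure limit $P_{T^-}$.

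Second, I would integrate \eqref{Sigma9}. Since that dynamic has no martingale component, $t\mapsto\Sigma_t$ is absolutely continuous and, upon substituting $\lambda_t^2=\Sigma_t/G_t$, it solves the pathwise linear ODE $\dot\Sigma_t=-\,\varepsilon^{2H-1}\sigma_t^2 G_t^{-1}\,\Sigma_t$. The coefficient $c_t:=\varepsilon^{2H-1}\sigma_t^2/G_t$ is locally integrable on $[0,T)$ because $G_t\ge\underline\sigma^2\varepsilon^{2H-1}(T-t)>0$ there by Lemma~\ref{G001}, so the integrating-factor formula gives
$$\Sigma_t=\Sigma_0\exp\!\left(-\int_0^t c_s\,ds\right),\qquad 0\le t<T .$$
Third, I would show $\int_0^t c_s\,ds\to\infty$ almost surely as $t\uparrow T$: using the upper bound $G_s\le\overline\sigma^2\varepsilon^{2H-1}(T-s)$ from Lemma~\ref{G001} and the uniform lower bound $\sigma_s\ge\underline\sigma$, one has $c_s\ge(\underline\sigma^2/\overline\sigma^2)(T-s)^{-1}$, hence $\int_0^t c_s\,ds\ge(\underline\sigma^2/\overline\sigma^2)\log\frac{T}{T-t}\to\infty$. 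Consequently $\Sigma_t\downarrow 0$ almost surely; since $0\le\Sigma_t\le\Sigma_0$, dominated convergence gives $\mathrm{E}[\Sigma_t]\to 0$, and the reduction of the first paragraph yields $P_{T^-}=v$ almost surely, i.e. $P_t\to v$ almost surely at time $T$.

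Aside from the routine computations, the only delicate point is the behaviour of $c_t$ at the terminal time: $G_t\downarrow 0$ makes $c_t$ blow up at $t=T$, and one must check (i) that this singularity is still integrable on each $[0,t]$ with $t<T$, which is exactly what the lower bound $G_t\ge\underline\sigma^2\varepsilon^{2H-1}(T-t)$ guarantees, and (ii) that the improper integral over the whole interval $[0,T)$ diverges, which is exactly what the upper bound on $G_t$ guarantees — and it is this divergence that forces full revelation $P_{T^-}=v$. A secondary point worth stating carefully is that the existence of the almost-sure limit $P_{T^-}$ rests on the $L^2$-boundedness of $P$, which in turn uses the square-integrability of $v$ coming from the Gaussian prior $N(P_0,\Sigma_0)$.
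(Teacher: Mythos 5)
Your proof is correct, but it takes a genuinely different route from the paper's. The paper works directly with the SDE: it solves the linear equation for $X_t=P_t-v$ explicitly, kills the deterministic part $I_1$ using the same logarithmic divergence of $\int_0^t \varepsilon^{2H-1}\sigma_u^2/G_u\,du$ that you use, and then must control the stochastic-integral part $I_2$, which it does by computing its quadratic variation, invoking the Dambis--Dubins--Schwarz time change (Lemma~\ref{strong0}) and the strong law of large numbers for Brownian motion (Lemma~\ref{strong}). You bypass the stochastic integral entirely: since $P_t=\mathrm{E}[v\,|\,\mathcal{F}^Y_t]$ is an $L^2$-bounded closed martingale, the almost-sure limit $P_{T^-}$ exists for free by martingale convergence, and identifying it with $v$ reduces to $\mathrm{E}[\Sigma_t]\to 0$, which follows from the pathwise linear ODE \eqref{Sigma9} and the upper bound on $G_t$ from Lemma~\ref{G001} (indeed your bound gives the deterministic estimate $\Sigma_t\le\Sigma_0\bigl((T-t)/T\bigr)^{\underline\sigma^2/\overline\sigma^2}$, so even dominated convergence is dispensable). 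Your argument is shorter and avoids the time-change machinery; its only extra input is the identification of $P$ with the conditional expectation \eqref{Pt}, which is legitimate here because \eqref{Pt2}--\eqref{lambdat} arise precisely from Lemma~\ref{Ltheta} under the conjectured strategy, but it does mean your proof is not a statement about the SDE alone, whereas the paper's computation would survive in a setting where $P$ is defined only by its dynamics. A small bonus of your version: it makes transparent that the upper bound on $G_t$ is what forces full revelation ($\int_0^T c_s\,ds=\infty$) while the lower bound only ensures integrability of $c$ on compact subsets of $[0,T)$; as a side remark, the paper's displayed two-sided bound on $\int_0^t\varepsilon^{2H-1}\sigma_u^2/G_u\,du$ carries a spurious factor $\varepsilon^{2H-1}$ that should cancel, exactly as it does in your computation.
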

\begin{proof}
It is straightforward that
\begin{eqnarray}\label{for19}
dP_t &=&\lambda_tdY_t\nonumber\\
&=&\sqrt{\frac{\Sigma_t}{G_t}}\theta_tdt+\sigma_t dB^{\varepsilon,H}_t\nonumber\\
&=&\sqrt{\frac{\Sigma_t}{G_t}}\left[\left(\theta_t+(H-\frac{1}{2}\psi_t)\right)dt+\varepsilon^{H-\frac{1}{2}}\sigma_t dW_t\right]\nonumber\\
&=&\sqrt{\frac{\Sigma_t}{G_t}}\left[\beta_t(v-P_t)dt+\varepsilon^{H-\frac{1}{2}}\sigma_t dW_t\right]\nonumber\\
&=&\frac{v-P_t}{G_t}\varepsilon^{2H-1}\sigma_t^2dt+\sqrt{\frac{\Sigma_t}{G_t}}\varepsilon^{H-\frac{1}{2}}\sigma_t dW_t
\end{eqnarray}
and
\begin{equation}\label{e20}
d\Sigma_t=-\frac{\Sigma_t}{G_t}\varepsilon^{2H-1}\sigma_t^2dt, \quad \Sigma_T=0.
\end{equation}
If we consider the process $X_t=P_t-v$, then
$$X_t=e^{-\int_{0}^{t}\frac{\varepsilon^{2H-1}\sigma_u^2}{G_u}du}X_0+\int_{0}^{t}e^{-\int_{s}^{t}\frac{\varepsilon^{2H-1}\sigma_u^2}{G_u}du}
\sqrt{\frac{\Sigma_s}{G_s}}\varepsilon^{H-\frac{1}{2}}\sigma_s dW_s\triangleq I_1+I_2.$$
From \eqref{bound1}, it is not difficult to figure out
$$\varepsilon^{2H-1}\frac{\underline{\sigma}^2}{\overline{\sigma}^2}\log\left(\frac{T}{T-t}\right)
\leq
\int_{0}^{t}\frac{\varepsilon^{2H-1}\sigma_u^2}{G_u}du.
\leq
\varepsilon^{2H-1}\frac{\overline{\sigma}^2}{\underline{\sigma}^2}\log\left(\frac{T}{T-t}\right),
$$
which directly leads to $\lim_{t\rightarrow T}I_1(t)=0$. Moreover, $I_2(t)$ can be alternatively expressed as
$$I_2(t)=e^{-\int_{0}^{t}\frac{\varepsilon^{2H-1}\sigma_u^2}{G_u}du}M_t,$$
where
$$M_t=\int_{0}^{t}e^{\int_{0}^{s}\frac{\varepsilon^{2H-1}\sigma_u^2}{G_u}du}\sqrt{\frac{\Sigma_s}{G_s}}\varepsilon^{H-\frac{1}{2}}\sigma_sdZ_s$$
is a Brownian martingale, whose quadratic variation is equal to
\begin{equation*}
\begin{aligned}
\langle M\rangle_t&=\int_{0}^{t}e^{\int_{0}^{s}\frac{2\varepsilon^{2H-1}\sigma_u^2}{G_u}du}\frac{\Sigma_s}{G_s}\varepsilon^{2H-1}\sigma^2_sds\\
&=\Sigma_0\int_{0}^{t}e^{\int_{0}^{s}\frac{\varepsilon^{2H-1}\sigma_u^2}{G_u}du}\frac{\varepsilon^{2H-1}\sigma^2_s}{G_s}ds\\
&=\Sigma_0\left(e^{\int_{0}^{t}\frac{\varepsilon^{2H-1}\sigma_u^2}{G_u}du}-1\right).
\end{aligned}
\end{equation*}
According to Lemma \ref{strong0}, there exists a standard Brownian motion $B_t$ such that the continuous martingale can be viewed as a time-changed Brownian motion, i.e., $M_t=B_{\langle M\rangle_t}$. Applying the strong law of large numbers for Brownian motion specified in Lemma \ref{strong}, we finally arrive at the desired result
\begin{equation}
\lim_{t\rightarrow T}I_2(t)=\lim_{t\rightarrow T}e^{-\int_{0}^{t}\frac{\varepsilon^{2H-1}\sigma_u^2}{G_u}du}M_t=
\lim_{t\rightarrow T}\frac{B_{\langle M\rangle_t}}{1+\frac{\langle M\rangle_t}{\Sigma_0}}
=\lim_{\tau \rightarrow \infty}\frac{\frac{B_\tau}{\tau}}{\frac{1}{\Sigma_0}+\frac{1}{\tau}}=0.
\end{equation}
This completes the proof.
\end{proof}

\begin{remark}\label{rre}
If we define the mean-reversion rate of $P_t$ as
$$\kappa_t=\frac{\varepsilon^{2H-1}\sigma_t^2}{G_t},$$
then we can obtain the mean-reverting form of $P_t$ as
\begin{equation}\label{dynamicP}
dP_t=\kappa_t(v-P_t)dt+\sqrt{\Sigma_0}e^{-\int_{0}^{t}\frac{\kappa_s}{2}ds}\sqrt{\kappa_t}dW_t
\end{equation}
and the insider would adopt the following memorable trading strategy
\begin{equation}\label{stra}
\theta_t=\frac{\kappa_t}{\lambda_t}(v-P_t)-(H-\frac{1}{2})\psi_t.
\end{equation}
\end{remark}
\begin{remark}\label{R41}
There is also a useful result about the limiting distribution of the standard price process $h_t=\frac{P_t-v}{\sqrt{\Sigma_t}}$. It\^{o}'s lemma can yield
\begin{equation}
dh_t=-\frac{1}{2}\frac{\varepsilon^{2H-1}\sigma_t^2}{G_t}h_tdt+\frac{\varepsilon^{H-\frac{1}{2}}\sigma_t}{\sqrt{G_t}}dZ_t,
\end{equation}
which implies that $h_t$ is a time-change Ornstein-Uhlenbeck (O-U) process with the following stochastic time change process:
$$
\tau_t=\int_{0}^{t}\frac{\varepsilon^{2H-1}\sigma_s^2}{G_s}ds,
$$
being independent of the filtration generated by $Z_t$. Furthermore, since $E[h_T]=0$ and $E[h^2_T]=1$, the limiting distribution of $h_T$ is a standard normal distribution.
\end{remark}

With the results presented above, we are now able to show that the market depth process is a martingale and a new bound can be established for $G_t$, the details of which are illustrated below.

\begin{lemma}\label{L44}
Market depth process $\frac{1}{\lambda_t}$ is a martingale which is orthogonal to the order flow. Moreover, the price impact process $\lambda_t$ is a submartingale.
\end{lemma}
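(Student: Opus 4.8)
The plan is to read the dynamics of the market depth $1/\lambda_t=\sqrt{G_t/\Sigma_t}$ off the characterizations already established and to check that its drift vanishes. From the BSDE representation of $\sqrt{G_t}$ in Lemma~\ref{G001}, together with the representation of the $\mathcal{F}^\sigma$-martingale occurring there in terms of the driving martingale $M$ of $\sigma$, one may write
\begin{equation*}
d\sqrt{G_t}=-\frac{\varepsilon^{2H-1}\sigma_t^2}{2\sqrt{G_t}}\,dt+\Lambda_t\,dM_t
\end{equation*}
for some integrand $\Lambda_t$, while by \eqref{e20} the conditional variance $\Sigma_t$ is of finite variation with $d\Sigma_t=-(\varepsilon^{2H-1}\sigma_t^2/G_t)\,\Sigma_t\,dt$, hence $d\Sigma_t^{-1/2}=\tfrac12(\varepsilon^{2H-1}\sigma_t^2/G_t)\,\Sigma_t^{-1/2}\,dt$. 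Applying the product rule to $1/\lambda_t=\sqrt{G_t}\cdot\Sigma_t^{-1/2}$ (no covariation term appears since $\Sigma_t$ is of finite variation), a one-line computation shows that the two drift contributions cancel exactly, leaving
\begin{equation*}
d\!\left(\frac{1}{\lambda_t}\right)=\frac{\Lambda_t}{\sqrt{\Sigma_t}}\,dM_t .
\end{equation*}
So $1/\lambda_t$ is a continuous local martingale driven solely by $M$; since $M$ is, by the modelling assumptions, independent of the Brownian motion $W$ underlying the order-flow innovation $B_t^{\varepsilon,H}$ (equivalently of the martingale part of $Y$), the quadratic covariation $\langle 1/\lambda_\cdot, Y_\cdot\rangle$ vanishes, which is the asserted orthogonality to the order flow.

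To upgrade the local martingale to a genuine martingale I would combine the fact that a nonnegative local martingale is a supermartingale with the two-sided bound $\underline{\sigma}^2\varepsilon^{2H-1}(T-t)\le G_t\le\overline{\sigma}^2\varepsilon^{2H-1}(T-t)$ of Lemma~\ref{G001} and the estimates $0<\Sigma_t\le\Sigma_0$ together with the lower bound on $\Sigma_t$ used in the proof of Lemma~\ref{L43}; these control the moments of $1/\lambda_t$ on every subinterval $[0,T-\delta]$ and give the martingale property on the trading horizon. For the submartingale claim, write $\lambda_t=\phi(1/\lambda_t)$ with $\phi(x)=1/x$, which is convex and strictly positive on $(0,\infty)$; since $1/\lambda_t>0$ on $[0,T)$ (because $G_t\ge\underline{\sigma}^2\varepsilon^{2H-1}(T-t)>0$ and $\Sigma_t\le\Sigma_0$), the conditional Jensen inequality applied to the martingale $1/\lambda_t$ gives $\mathrm{E}[\lambda_s\mid\mathcal{F}_t^Y]\ge\lambda_t$ for $t\le s$, i.e. $\lambda_t$ is a submartingale, with the required integrability coming from $\lambda_t\le\sqrt{\Sigma_0/(\underline{\sigma}^2\varepsilon^{2H-1}(T-t))}$ on compact subintervals and from $\lambda_T=0$ by continuity.

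The substantive point, and the one I expect to require the most care, is the passage from local to true martingale for $1/\lambda_t$ near the terminal time $T$, where $G_t$ and $\Sigma_t$ both degenerate to $0$: one must use the precise decay rates supplied by Lemmas~\ref{G001} and~\ref{L43} to secure uniform integrability (or, failing that, confine the statement to $[0,T)$, which is the economically relevant horizon). Everything else is a routine It\^o computation followed by Jensen's inequality.
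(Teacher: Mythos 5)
Your proposal is correct and follows essentially the same route as the paper: the product rule applied to $1/\lambda_t=\sqrt{G_t}\,\Sigma_t^{-1/2}$ with the drift cancellation coming from the BSDE for $\sqrt{G_t}$ and the finite-variation dynamics \eqref{e20} of $\Sigma_t$, orthogonality from the independence of $M$ and the Brownian motion driving the order flow, and the submartingale property via conditional Jensen applied to the convex map $x\mapsto 1/x$. The only difference is that you flag and address the local-to-true martingale issue near $T$ (where $\Sigma_t$ and $G_t$ degenerate), a point the paper passes over by simply invoking the boundedness of $\mathcal{M}_t$; your extra care there is a refinement, not a different argument.
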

\begin{proof}
From the definition of $G_t$ specified in Lemma \ref{G001}, it is straightforward to deduce
\begin{equation}
d\sqrt{G_t}+\frac{\varepsilon^{2H-1}\sigma_t^2}{2\sqrt{G_t}}dt=d\mathcal{M}_t,
\end{equation}
where
$$
\mathcal{M}_t=\mathrm{E}\left[\left.\int_{0}^{T}\frac{\varepsilon^{2H-1}\sigma_t^2}{2\sqrt{G_t}}dt\right|\sigma^t\right].
$$
Since
$$
\mathcal{M}_t\leq \frac{\varepsilon^{H-\frac{1}{2}}\overline{\sigma}^2}{\underline{\sigma}}\sqrt{T},
$$
which is a direct result of \eqref{bound1}, $\mathcal{M}_t$ is actually a martingale adapted to the filtration generated by the noise trading volatility process. Using the definition of market depth process, one can easily obtain
\begin{equation}
d\frac{1}{\lambda_t}=d\sqrt{\frac{\Sigma_t}{G_t}}=\frac{1}{\Sigma_t}d\sqrt{G_t}-\frac{\sqrt{G_t}}{2\Sigma_t^{\frac{3}{2}}}d\Sigma_t=\frac{1}{\Sigma_t}d\mathcal{M}_t,
\end{equation}
which clearly shows that $\frac{1}{\lambda_t}$ is a martingale. Furthermore, since $Z_t$ and $M_t$ are independent with each other, we have $d\mathcal{M}_t dZ_t=0$ and so $d\frac{1}{\lambda_t}dY_t=0$. Finally, a further computation using Jensen's inequality yields
$$\frac{1}{\mathrm{E}[\lambda_s|\mathcal{F}_t]}\leq \mathrm{E}\left[\left.\frac{1}{\lambda_s}\right|\mathcal{F}_t\right]=\frac{1}{\lambda_t},$$
indicating that $\lambda_t\leq E[\lambda_s|\mathcal{F}_t]$. This completes the proof.
\end{proof}

\begin{remark}
It should be remarked that this contradicts to the results from most of the previous literature. While the price impact is constant in the original the Kyle model and either a martingale or a supermartingale in various well-known extensions of the Kyle model \cite{Back92,Back98,Back04,Baruch02,Caldentey10}, our result is consistent with what is presented in \cite{Collin16}, where the insider trader has an opportunity to wait for a better liquidity to trade. Indeed, the price impact must increase on average to encourage the insider to trade early and give up his opportunity to wait for better liquidity states under the framework of stochastic liquidity.
\end{remark}

\begin{lemma}
If a bounded solution $G_t$ exists, then
$$
G_t\leq \varepsilon^{2H-1}\mathrm{E}\left[\int_{t}^{T}\sigma_s^2ds\right].
$$
\end{lemma}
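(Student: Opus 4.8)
The plan is to exploit the semimartingale description of $\sqrt{G_t}$ already obtained in the proof of Lemma \ref{L44}. Writing $g_t\triangleq\sqrt{G_t}$, that proof shows
\begin{equation*}
dg_t=-\frac{\varepsilon^{2H-1}\sigma_t^2}{2g_t}\,dt+d\mathcal{M}_t,\qquad g_T=0,
\end{equation*}
where $\mathcal{M}_t=\mathrm{E}\big[\int_0^T\frac{\varepsilon^{2H-1}\sigma_s^2}{2\sqrt{G_s}}\,ds\,\big|\,\sigma^t\big]$ is a bounded martingale adapted to the filtration generated by $\sigma$ (boundedness being a consequence of \eqref{bound1}). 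First I would apply It\^o's formula to $G_t=g_t^2$; since the finite–variation drift does not contribute to the bracket, this gives
\begin{equation*}
dG_t=2g_t\,dg_t+d\langle g\rangle_t=-\varepsilon^{2H-1}\sigma_t^2\,dt+2\sqrt{G_t}\,d\mathcal{M}_t+d\langle\mathcal{M}\rangle_t .
\end{equation*}

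Next I would integrate this identity over $[t,T]$ and use the terminal condition $G_T=0$ to get
\begin{equation*}
G_t=\varepsilon^{2H-1}\int_t^T\sigma_s^2\,ds-\int_t^T 2\sqrt{G_s}\,d\mathcal{M}_s-\big(\langle\mathcal{M}\rangle_T-\langle\mathcal{M}\rangle_t\big).
\end{equation*}
Taking the conditional expectation given $\mathcal{F}_t^\sigma$ and using that $G_t$ is $\mathcal{F}_t^\sigma$-measurable, the stochastic integral term disappears, and since $\langle\mathcal{M}\rangle$ is nondecreasing the remaining bracket contribution has nonnegative conditional expectation; dropping it yields
\begin{equation*}
G_t\leq\varepsilon^{2H-1}\,\mathrm{E}\!\left[\left.\int_t^T\sigma_s^2\,ds\,\right|\mathcal{F}_t^\sigma\right],
\end{equation*}
which is the asserted bound (read, as throughout the paper, conditionally on the volatility history; the unconditional form follows by a further expectation).

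The one point that genuinely needs care is the vanishing of $\mathrm{E}\big[\int_t^T 2\sqrt{G_s}\,d\mathcal{M}_s\,\big|\,\mathcal{F}_t^\sigma\big]$, i.e.\ that $\int_0^{\cdot}2\sqrt{G_s}\,d\mathcal{M}_s$ is a true martingale and not merely a local one. This is precisely where the standing hypothesis that $G_t$ is bounded enters: $2\sqrt{G_s}$ is then a bounded integrand, and $\mathcal{M}$ being a bounded martingale has $\mathrm{E}[\langle\mathcal{M}\rangle_T]<\infty$, so the It\^o integral is square-integrable and hence a martingale. An alternative, more elementary route would be to compare $\sqrt{G_t}$ pathwise with the deterministic solution driven by $\overline{\sigma}$ as in Lemma \ref{G001}, but the It\^o argument above is preferable because it makes the exact defect $\langle\mathcal{M}\rangle_T-\langle\mathcal{M}\rangle_t$ — the amount by which the inequality fails to be an equality — explicit.
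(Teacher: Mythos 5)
Your argument is essentially the paper's own proof: the authors likewise apply It\^o's formula to $G_t=(\sqrt{G_t})^2$, obtain $dG_t=-\varepsilon^{2H-1}\sigma_t^2\,dt+2\sqrt{G_t}\,d\mathcal{M}_t+d[\sqrt{G}]_t$ (their bracket term $\Sigma_t\,d[\tfrac{1}{\lambda}]_t$ is exactly your $d\langle\mathcal{M}\rangle_t$), integrate from $t$ to $T$, and take expectations, discarding the nonnegative quadratic-variation contribution. Your additional remarks on why the stochastic integral is a true martingale and on the conditional-versus-unconditional reading of the bound are correct refinements of details the paper leaves implicit, but the route is the same.
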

\begin{proof}
Applying It\^o's formula to $G_t$ directly yields
\begin{eqnarray}\label{G002}
d\sqrt{G_t}^2&=&2\sqrt{G_t}d\sqrt{G_t}+d[\sqrt{G_t}]_t \nonumber \\
&=&-\varepsilon^{2H-1}\sigma_t^2dt+2\sqrt{G_t}d\mathcal{M}_t+d[\sqrt{G_t}]_t \nonumber \\
&=&-\varepsilon^{2H-1}\sigma_t^2dt+2\sqrt{G_t}d\mathcal{M}_t+\Sigma_t d\left[\frac{1}{\lambda}\right]_t.
\end{eqnarray}
Integrating \eqref{G002} from $t$ to $T$ and taking the expectation on both sides of the resulted equation, we can get the desired result.
\end{proof}
\subsection{Optimal strategy and verifying theorem}

With all necessary results presented in the previous subsection, we are now ready to show that the strategy \eqref{stra} is indeed the optimal solution to the target problem \eqref{max}. The main results are summarized in the following proposition.

\begin{proposition}\label{prop1}
If price dynamic is given by \eqref{Pt}, \eqref{lambdat} and the volatility is uniformly bounded above by $\overline{\sigma}$ and below by $\underline{\sigma}>0$, then the optimal value process defined in \eqref{max} can be derived as
\begin{equation}\label{formJ}
J_t=\frac{(v-P_t)^2+\Sigma_t}{2\lambda_t}+\left(H-\frac{1}{2}\right)\int_{0}^{t}\psi_s(v-P_s) ds-\left(H-\frac{1}{2}\right)A,
\end{equation}
and the optimal trading strategy has the following expression
\begin{equation}\label{e55}
\theta^{*}_t=\frac{\varepsilon^{2H-1} \sigma_t^2}{\lambda_t G_t}(v-P_t)-(H-\frac{1}{2})\psi_t.
\end{equation}
Here, the constant $A$ can be computed from
\begin{equation}\label{assum}
A=\mathrm{E}\left[\left.\int_{0}^{T}\psi_t(v-P_t) dt \right| \mathcal{F}^Y_0,v\right],
\end{equation}
where the dynamic of $P_t$ is specified in \eqref{dynamicP}.
\end{proposition}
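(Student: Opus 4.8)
The plan is a stochastic-control verification argument: treat \eqref{formJ} as a candidate value process and \eqref{e55} as a candidate optimal rate, and check that for every admissible $\theta$ the process $R^{\theta}_t:=J_t+\int_0^t(v-P_s)\theta_s\,ds$ has a nonpositive drift (in fact it vanishes), while $\theta^{*}$ makes its terminal value agree with $J_0$ in conditional mean. Throughout, $P,\Sigma,\lambda,G$ are the \emph{equilibrium} processes of Section~3.1 — $dP_t=\lambda_t\,dY_t$, $d\Sigma_t=-\lambda_t^2\varepsilon^{2H-1}\sigma_t^2\,dt$ (Lemma~\ref{Ltheta}), $\lambda_t=\sqrt{\Sigma_t/G_t}$ with $G_t$ bounded (Lemma~\ref{G001}) and $1/\lambda_t$ a martingale orthogonal to $W$ (Lemma~\ref{L44}) — and only the insider's rate $\theta_t$ is allowed to deviate.

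The core is an It\^o computation for $J_t$. Writing $J_t=\tfrac12\mu_tN_t+(H-\tfrac12)\int_0^t\psi_s(v-P_s)\,ds-(H-\tfrac12)A$ with $\mu_t:=1/\lambda_t$ and $N_t:=(v-P_t)^2+\Sigma_t$, three features make the drift collapse: the quadratic-variation term $\tfrac12\,d\langle P\rangle_t=\tfrac12\lambda_t^2\varepsilon^{2H-1}\sigma_t^2\,dt$ produced by $(v-P_t)^2$ is exactly cancelled by $d\Sigma_t$ inside $N_t$, so that $dN_t=-2(v-P_t)\,dP_t$; $\langle N,\mu\rangle\equiv 0$ since $N$ is driven by $W$ while $\mu$ is driven by the independent volatility martingale; and $\mu$ itself is driftless. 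Hence $dJ_t=-\mu_t(v-P_t)\,dP_t+\tfrac12 N_t\,d\mu_t+(H-\tfrac12)\psi_t(v-P_t)\,dt$, and substituting the semimartingale form $dP_t=\lambda_t\big(\theta_t+(H-\tfrac12)\psi_t\big)dt+\lambda_t\varepsilon^{H-\frac{1}{2}}\sigma_t\,dW_t$ (from \eqref{e12}; cf.~\eqref{for19}) together with $\mu_t\lambda_t=1$, the $\psi$-drifts cancel and
\begin{equation*}
dR^{\theta}_t=dJ_t+(v-P_t)\theta_t\,dt=-(v-P_t)\varepsilon^{H-\frac{1}{2}}\sigma_t\,dW_t+\tfrac12 N_t\,d\mu_t,
\end{equation*}
which is a local martingale for \emph{every} admissible $\theta$. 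After an integrability/localization check so that $R^{\theta}$ is a true martingale — using the bounds on $\lambda_t,G_t$ on $[0,T)$, boundedness of $\sigma$, finiteness of $\int_0^t(t-u+\varepsilon)^{2H-3}\,du$ (so $\psi$ has uniformly bounded variance) and the $\log(T/(T-t))$-type bounds on $\int_0^t\varepsilon^{2H-1}\sigma_u^2/G_u\,du$ from the proof of Lemma~\ref{L43} — taking conditional expectations and using $\Sigma_T=0$ gives, for every admissible $\theta$,
\begin{equation*}
\mathrm{E}\!\left[\int_0^T(v-P_s)\theta_s\,ds\,\Big|\,\mathcal{F}^Y_0,v\right]=J_0-\mathrm{E}\!\left[\frac{(v-P_T)^2}{2\lambda_T}\,\Big|\,\mathcal{F}^Y_0,v\right]-\Big(H-\tfrac12\Big)\!\left(\mathrm{E}\!\left[\int_0^T\psi_s(v-P_s)\,ds\,\Big|\,\mathcal{F}^Y_0,v\right]-A\right).
\end{equation*}

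For the candidate $\theta^{*}$ the first correction vanishes because $P_T=v$ almost surely (Lemma~\ref{L43}), and the second vanishes because $A$ is \emph{defined} by \eqref{assum} precisely along the dynamics \eqref{dynamicP} generated by $\theta^{*}$; hence $\theta^{*}$ attains $J_0$. Rerunning the computation from a general time $t$ — which additionally requires that the martingale $t\mapsto\mathrm{E}[\int_0^T\psi_u(v-P_u)\,du\mid\mathcal{F}^Y_t,v]$ equal the constant $A$ for all $t$, again a consequence of \eqref{assum} — yields the full expression \eqref{formJ} for the value process. That no admissible $\theta$ does better then amounts to the inequality $\mathrm{E}[(v-P_T)^2/(2\lambda_T)\mid\mathcal{F}^Y_0,v]+(H-\tfrac12)\,\mathrm{E}[\int_0^T\psi_s(v-P_s)\,ds\mid\mathcal{F}^Y_0,v]\ge(H-\tfrac12)A$. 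Finally, $\theta^{*}$ (or suitable truncations of it near $t=T$) lies in the admissible class, and \eqref{e55} is just the conjectured rule \eqref{stra} with $\beta_t=\kappa_t/\lambda_t$ and $\kappa_t=\varepsilon^{2H-1}\sigma_t^2/G_t$.

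The main obstacle is this last inequality — ruling out that an off-equilibrium strategy exploits the long-memory term $\int_0^T\psi_s(v-P_s)\,ds$ to beat the constant $A$ — together with its companion, the constancy in $t$ of the $\psi$-martingale needed for the intermediate-time form of \eqref{formJ}. When $H=\tfrac12$ both disappear and the argument collapses to the classical Kyle--Back bound $\mathrm{E}[(v-P_T)^2/(2\lambda_T)]\ge 0$; for $H>\tfrac12$ I would isolate the only $\theta$-dependent contribution to $\mathrm{E}[\int_0^T\psi_s(v-P_s)\,ds\mid\mathcal{F}^Y_0,v]$, namely $\int_0^T\mathrm{E}\big[\psi_s\!\int_0^s\lambda_t\theta_t\,dt\,\big|\,\mathcal{F}^Y_0,v\big]\,ds$ — the remaining pieces being strategy-free since $\psi$ is orthogonal to future order-flow increments — and bound it by the terminal quadratic penalty via Cauchy--Schwarz and the estimates on $\lambda$ and $G$. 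A secondary nuisance is the behaviour at $t=T$, where $\Sigma_t\downarrow 0$ and $G_t\downarrow 0$ so that $1/\lambda_t$ and $\theta^{*}_t$ need not stay square-integrable; both the true-martingale upgrade and the admissibility argument have to be run with the sharp boundary rates from the proof of Lemma~\ref{L43}, in the same spirit as the continuous-time Kyle--Back analysis.
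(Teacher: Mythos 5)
Your verification argument is essentially the paper's own proof. You use the same It\^o decomposition of $J_t$ (cancellation of $\tfrac12\,d\langle P\rangle_t$ against $\tfrac1{2\lambda_t}\,d\Sigma_t$, orthogonality of $1/\lambda_t$ to $W$, and cancellation of the $\psi$-drift coming from $-(v-P_t)\,dP_t/\lambda_t$ against the compensator $(H-\tfrac12)\int_0^t\psi_s(v-P_s)\,ds$), the same integrability checks making the two stochastic integrals true martingales, and the same passage to $J_0=\mathrm{E}\bigl[\int_0^T\theta_t(v-P_t)\,dt+J_T\bigr]$ with $\mathrm{E}[J_T]=0$ at the candidate optimum via $P_T=v$ (Lemma \ref{L43}) and $\Sigma_T=0$. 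Where you differ is in honestly isolating what you call the ``main obstacle'': for an arbitrary admissible $\theta$ the price path is not the equilibrium path \eqref{dynamicP}, so $\mathrm{E}\bigl[\int_0^T\psi_s(v-P_s)\,ds\mid\mathcal{F}^Y_0,v\bigr]$ need not equal the constant $A$ of \eqref{assum}, and optimality requires exactly the inequality you write down. The paper does not resolve this point: it asserts that $\mathrm{E}[J_T]\geq 0$ is ``straightforward'' because $\frac{(v-P_T)^2+\Sigma_T}{2\lambda_T}\geq 0$, which silently discards the term $(H-\tfrac12)\bigl(\int_0^T\psi_s(v-P_s)\,ds-A\bigr)$ for off-equilibrium strategies; that term has zero mean only along the conjectured optimal dynamics, by the very definition of $A$. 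The same issue affects the intermediate-time formula \eqref{formJ}, since the constancy in $t$ of $\mathrm{E}\bigl[\int_0^T\psi_u(v-P_u)\,du\mid\mathcal{F}^Y_t,v\bigr]$ is used but not established. So your proposal is at least as complete as the published proof, and your sketched route for closing the gap --- separating the $\theta$-dependent part of the $\psi$-functional and dominating it by the terminal quadratic penalty --- is precisely the missing ingredient that would be needed to make either argument rigorous for $H>\tfrac12$; it should not be treated as a mere technicality, since it is the only step that actually rules out a profitable deviation exploiting the long-memory term.
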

\begin{proof}

We conjecture that the target value function can be expressed as \eqref{formJ}. Applying It\^o's lemma to $J_t$ yields
\begin{eqnarray}\label{dJ}
dJ_t&=&\frac{(v-P_t)^2+\Sigma_t}{2}d\lambda_t+\frac{1}{\lambda_t}\left(-(v-P_t)dP_t+\frac{1}{2}(dP_t)^2\right)+\frac{1}{2\lambda_t}d\Sigma_t+\left(H-\frac{1}{2}\right)\psi_t(v-P_t) \nonumber \\
&=&\frac{(v-P_t)^2+\Sigma_t}{2\sqrt{\Sigma_t}}d\mathcal{M}_t+\frac{1}{2\lambda_t}d\Sigma_t + \frac{1}{\lambda_t}\left[-(v-P_t)\lambda_t\left(\theta_tdt+\varepsilon^{H-\frac{1}{2}}\sigma_t dW_t\right)+\frac{1}{2}\varepsilon^{2H-1}\sigma_t^2dt\right] \nonumber\\
&=&\frac{(v-P_t)^2+\Sigma_t}{2\sqrt{\Sigma_t}}d\mathcal{M}_t-\theta_t(v-P_t)dt-\varepsilon^{H-\frac{1}{2}}\sigma_t(v-P_t) dW_t,
\end{eqnarray}
where the second equality is obtained using \eqref{for19} in Lemma \ref{L43} and the third equality is a consequence of \eqref{Sigma9} in Lemma \ref{Ltheta}.
Integrating \eqref{dJ} from $0$ to $T$, we further obtain
\begin{equation}\label{e34}
J_T-J_0+\int_{0}^{T}\theta_t(v-P_t)dt
=\int_{0}^{T}\frac{(v-P_t)^2+\Sigma_t}{2\sqrt{\Sigma_t}}d\mathcal{M}_t+\int_{0}^{T}\varepsilon^{H-\frac{1}{2}}\sigma_t(v-P_t) dW_t.
\end{equation}
If we define
$$I_1(t)=\int_{0}^{t}\varepsilon^{H-\frac{1}{2}}\sigma_s(v-P_s)dW_s,$$
and
$$I_2(t)=\int_{0}^{t}\frac{(v-P_s)^2+\Sigma_s}{2\sqrt{\Sigma_s}}d\mathcal{M}_s,$$
then they are actually martingales for any admissible strategy. To prove this, we start from \eqref{dynamicP} and \eqref{stra} by expressing $P_t$ as
$$P_t=P_0+\int_{0}^{t}\lambda_s\widehat{\theta}_sds+\int_{0}^{t}\varepsilon^{H-\frac{1}{2}}\sigma_s\lambda_sdW_s$$
with $\widehat{\theta}_s=\theta_s+\left(H-\frac{1}{2}\right)\psi_s$. Clearly, one has
$$\int_{0}^{t}\varepsilon^{2H-1}\sigma^2_s\lambda^2_sdW_s=\Sigma_0-\Sigma_t<+\infty$$
and
\begin{eqnarray}
\mathrm{E}\left[\int_{0}^{t}\lambda_s\widehat{\theta}^2_sds\right]\nonumber
&\leq& \mathrm{E}\left[\int_{0}^{t}\lambda^2_sds \int_{0}^{t}\widehat{\theta}^2_sds\right]
\leq \mathrm{E}\left[\int_{0}^{t}\frac{1}{\underline{\sigma}^2}\frac{\Sigma_s}{G_s}\sigma_s^2 ds \int_{0}^{t}\widehat{\theta}^2_sds\right]
=\mathrm{E}\left[\frac{\Sigma_0-\Sigma_t}{\varepsilon^{2H-1}\underline{\sigma}^2}\int_{0}^{t}\widehat{\theta}^2_sds\right]\\\nonumber
&\leq&\mathrm{E}\left[\frac{\Sigma_0}{\varepsilon^{2H-1}\underline{\sigma}^2}\int_{0}^{t}\widehat{\theta}^2_sds\right]
\leq\frac{2\Sigma_0}{\varepsilon^{2H-1}\underline{\sigma}^2}\mathrm{E}\left[\int_{0}^{t}\widehat{\theta}^2_sds\right]\\\nonumber
&\leq&\frac{4\Sigma_0}{\varepsilon^{2H-1}\underline{\sigma}^2}\mathrm{E}\left[\int_{0}^{t}\theta^2_sds+(H-\frac{1}{2})^2\int_{0}^{t}\psi^2_sds\right]
<\infty,
\end{eqnarray}
where the third equality follows from \eqref{e20} and the last inequality is a result of \eqref{e2} and \eqref{e12}. This implies that $P_t$ has finite variance. Considering that $\sigma_t$ is uniformly bounded, we can certainly obtain
$$
\mathrm{E}\left[\int_{0}^{T}\varepsilon^{2H-1}\sigma^2_t(v-P_t)^2dt\right]<\infty,
$$
and thus $I_1(t)$ is a martingale. On the other hand, it follows from Lemma \ref{L44} and Remark \ref{R41} that
$$
\mathrm{E}\left[\frac{(v-P_t)^4}{\Sigma_t}\right]<\Sigma_0 \mathrm{E}[h_t^4]<\infty,
$$
which further leads to
$$\mathrm{E}\left[\int_{0}^{T}\frac{(v-P_s)^4}{\Sigma_s}d\langle\mathcal{M}_s\rangle\right]<\infty.$$
In this case,
$$
\int_{0}^{T}\frac{(v-P_t)^2}{\sqrt{\Sigma_t}}d\mathcal{M}_t
$$
is also a martingale, since $\mathcal{M}_t$ is a uniformly bounded martingale. As a result, considering $\Sigma_t$ is a decreasing process, $I_2(t)$ is a martingale.

Now, if we take expectation on both sides of \eqref{e34}, then
\begin{equation}\label{ine1}
J_0=\mathrm{E}\left[\int_{0}^{T}\theta_t(v-P_t)dt+J_T\right]
\end{equation}
for any $\theta\in\mathcal{A}$. Since $\frac{(v-P_t)^2+\Sigma_t}{2\lambda_t}\geq 0$, the expectation of $J_T$, $\mathrm{E}[J_T]$, can be calculated from \eqref{formJ} by setting $t=T$ and it is straightforward that $\mathrm{E}[J_T]\geq 0$, yielding
$$\mathrm{E}\left[\int_{0}^{T}\theta_t(v-P_t)dt\right]\leq J_0.$$
This implies that $J_0$ will be the optimal value function if there exists a trading strategy $\theta_t^*$, being consistent with \eqref{lambdat}, such that $\mathrm{E}[J_T]=0$. Indeed, from Lemma \ref{G001} and \eqref{assum}, we get
\begin{equation}
\mathrm{E}[J_T]=\mathrm{E}\left[\frac{(v-P_T)^2}{2\lambda_T}+\frac{\sqrt{\Sigma_TG_T}}{2}\right]
=\mathrm{E}\left[\frac{(v-P_T)^2}{2\lambda_T}\right]
\leq \sqrt{\mathrm{E}\left[(v-P_T)^2G_T\right]\mathrm{E}\left[\left(\frac{v-P_T}{\sqrt{\Sigma_T}}\right)^2\right]}
=0.
\end{equation}
Therefore, we have proved the optimality of the value function and that of the trading strategy.
\end{proof}

From Proposition \ref{prop1} and Remark \ref{R41}, it is clear that the equilibrium price admits a bridge process that converges to the value $v$ which is only known to the insider at maturity $T$. This guarantees that all private information will have been incorporated into equilibrium prices at maturity, and the result presented in \cite{Back92} that the equilibrium price in the continuous-time Kyle model follows a standard Brownian Bridge with the constant volatility is generalized. It can also be observed from our model that only if the mean-reversion rate $\kappa_t$ is stochastic will the equilibrium price volatility be stochastic.

Proposition \ref{prop1} further indicates that while the optimal trading strategy for the insider is to trade proportionally to the undervaluation of the asset $v-P_t$ at a rate that is inversely related to his/her price impact $\lambda_t$, it is a monotonic increasing function of the current ``state of liquidity", which is measured by the relative difference between the current noise trading variance $\sigma_t^2$ and the expected noise trading variance $G_t$, i.e., $\frac{\sigma_t^2}{G_t}$. It should also be remarked that due to the presence of the fractional Brownian motion, our optimal strategy $\theta^*(t)$, embracing the stochastic term $(H-\frac{1}{2})\psi_t$, also shows long memory, a property that is not demonstrated in \cite{Collin16}, implying that the introduction of the fractional Brownian motion has a significant impact on the choice of the optimal strategy.

Some further remarks are made below for aggregate execution or slippage costs incurred by uninformed liquidity traders.

\begin{remark}
The total losses between $0$ and $T$ by noise traders can be derived through
\begin{equation}
\int_{0}^{T}(P_{t+dt}-v)\sigma_tdB^{\varepsilon,H}_t=\int_{0}^{T}(P_t+dP_t-v)\sigma_tdB^{\varepsilon,H}_t
=\int_{0}^{T}\varepsilon^{2H-1}\lambda_t\sigma_t^2dt+\int_{0}^{T}(P_t-v)\sigma_tdB^{\varepsilon,H}_t.
\end{equation}
The first component is the pure execution or slippage cost caused by the situation that market orders submitted at time $t$ in the Kyle model will get executed at date $t + dt$ at a price set by competitive market makers. The second component is a fundamental loss resulted from noise traders purchasing a security with long memory, whose fundamental value $v$ is unknown to them. From this, using a similar definition in \cite{Collin16}, aggregate execution or slippage costs incurred by uninformed liquidity traders here can be obtained as
\begin{equation}
\int_{0}^{T}\sigma_tdB^{\varepsilon,H}_tdP_t=\int_{0}^{T}\varepsilon^{2H-1}\lambda_t\sigma_t^2dt,
\end{equation}
which is stochastic, being path-dependent, and is affected by the fractional noise.

\end{remark}
\begin{remark}
The unconditional expected profits of the insider can be determined through
\begin{equation*}
\begin{aligned}
\mathrm{E}\left[\int_{0}^{T}(v-P_t)\theta_t dt\right]
&=\mathrm{E}\left[\int_{0}^{T}\frac{\varepsilon^{2H-1}\sigma_t^2}{\sqrt{\Sigma_tG_t}}(v-P_t)^2 dt \right]-(H-\frac{1}{2})\mathrm{E}\left[\int_{0}^{T}(v-P_t)\psi_tdt \right]\\
&=\mathrm{E}\left[\int_{0}^{T}\frac{\varepsilon^{2H-1}\sigma_t^2}{\sqrt{\Sigma_tG_t}}\Sigma_t dt\right]-(H-\frac{1}{2})\mathrm{E}\left[\int_{0}^{T}(v-P_t)\psi_tdt \right]\\
&=\mathrm{E}\left[\int_{0}^{T}\varepsilon^{2H-1}\lambda_t\sigma_t^2dt\right]-(H-\frac{1}{2})\mathrm{E}\left[\int_{0}^{T}(v-P_t)\psi_tdt \right],
\end{aligned}
\end{equation*}
where the first equality is obtained with the substitution of $\theta_t^*$ and the second one is derived using the law of iterated expectations. Being different from the results in \cite{Collin16}, the unconditional expected profits of the insider is no longer equal to the unconditional expected execution costs paid by noise traders, and instead, they have an additional component due to the introduction of the fractional Brownian motion, implying that the property of long memory possessed by the aggregate order flow has an influence on the insider's unconditional expected profits.
\end{remark}

\section{Trading volume, volatility and price impact}

In this section, two special cases are considered to further investigate the effect of the noise trading volatility processes on the equilibrium, and they are distinguished by whether the growth rate of noise trading is stochastic

\subsection{Deterministic growth rate of the noise trading volatility}
This subsection will discuss the case where the growth rate of the noise trading volatility process in \eqref{MMT} is deterministic (the volatility of that takes a general form). Under this particular assumption, a closed-form solution for $G_t$ can be derived, based on which the equilibrium price process, the equilibrium trading strategy, the equilibrium volatility, and the equilibrium price impact can be obtained. The corresponding results are presented in the following proposition.

\begin{proposition}
Suppose that the growth rate of the noise trading volatility is deterministic such that
\begin{equation}
D_t=\int_{t}^{T}e^{\int_{t}^{u}2m_sds}du
\end{equation}
is bounded for all $t \in [0,T]$. Then the solution to \eqref{GGT} can be expressed as
\begin{equation}\label{GGT2}
G_t=\varepsilon^{2H-1}\sigma_t^2D_t
\end{equation}
and the stock price dynamic has the following form
\begin{equation}
dP_t=\frac{1}{D_t}(v-P_t)dt+e^{\int_{0}^{t}m_sds}\sigma_vdW_t,
\end{equation}
where $\sigma_v^2=\frac{\Sigma_0}{D_0}$. In equilibrium, the price impact can be represented by
\begin{equation}
\lambda_t=e^{\int_{0}^{t}m_sds}\frac{1}{\varepsilon^{H-\frac{1}{2}}}\frac{\sigma_v}{\sigma_t}
\end{equation}
and the optimal trading strategy of the insider can be formulated as
\begin{equation}
\theta^*_t=\frac{1}{\lambda_tD_t}(v-P_t)-(H-\frac{1}{2})\psi_t.
\end{equation}
Furthermore, the expected trading rate of the insider is
\begin{equation}
\mathrm{E}[\theta|v,\mathcal{F}_0]=e^{\int_{0}^{t}2m_sds}\frac{\sigma_0(v-P_0)}{\sigma_vD_0}.
\end{equation}
\end{proposition}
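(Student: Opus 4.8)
The plan is to guess the multiplicative ansatz $G_t=\varepsilon^{2H-1}\sigma_t^2 D_t$, verify that it solves the BSDE characterizing $G_t$ in Lemma~\ref{G001}, and then read off every remaining claim by substituting this $G_t$ into Remark~\ref{rre} and Proposition~\ref{prop1}. Thus the real work is the first verification; the rest is algebra organized around one ODE identity for $D_t$.

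\textbf{Step 1 (the formula for $G_t$).} Under the standing assumption of this subsection, the growth rate in \eqref{MMT} is a deterministic function $m_t$, so $d\sigma_t=m_t\sigma_t\,dt+\nu(t,\sigma^t)\sigma_t\,dM_t$. Write $\sqrt{G_t}=\varepsilon^{H-\frac12}\sigma_t D_t^{1/2}$; since $D_t$ is a deterministic $C^1$ function there is no cross-variation term, and It\^{o}'s formula gives
$$d\sqrt{G_t}=\varepsilon^{H-\frac12}\sigma_t\,\frac{2m_tD_t+D_t'}{2D_t^{1/2}}\,dt+\varepsilon^{H-\frac12}D_t^{1/2}\nu(t,\sigma^t)\sigma_t\,dM_t.$$
Differentiating $D_t=\int_t^Te^{\int_t^u2m_s\,ds}\,du$ by Leibniz's rule yields the ODE $D_t'=-1-2m_tD_t$ with $D_T=0$, so the drift above collapses to $-\tfrac{\varepsilon^{H-1/2}\sigma_t}{2D_t^{1/2}}=-\tfrac{\varepsilon^{2H-1}\sigma_t^2}{2\sqrt{G_t}}=-f(t,\sqrt{G_t})$ and the terminal value is $\sqrt{G_T}=0$. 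Hence $\sqrt{G_t}$ solves the BSDE of Lemma~\ref{G001}; it is bounded because $D_t$ is bounded by hypothesis and $\underline{\sigma}\le\sigma_t\le\overline{\sigma}$, so by the uniqueness assertion in Lemma~\ref{G001} it is the solution, which proves \eqref{GGT2}.

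\textbf{Step 2 (price, price impact, strategy).} With $G_t=\varepsilon^{2H-1}\sigma_t^2D_t$ the mean-reversion rate of Remark~\ref{rre} becomes $\kappa_t=\varepsilon^{2H-1}\sigma_t^2/G_t=1/D_t$, so \eqref{dynamicP} reads $dP_t=\tfrac1{D_t}(v-P_t)\,dt+\sqrt{\Sigma_0}\,e^{-\int_0^t\frac{1}{2D_s}ds}D_t^{-1/2}\,dW_t$. Integrating the ODE for $D_t$ gives the identity $e^{-\int_0^t\frac{1}{D_s}ds}=\tfrac{D_t}{D_0}e^{\int_0^t2m_s\,ds}$, which rewrites the diffusion coefficient as $e^{\int_0^tm_s\,ds}\sqrt{\Sigma_0/D_0}=e^{\int_0^tm_s\,ds}\sigma_v$ and, applied to $\Sigma_t=\Sigma_0e^{-\int_0^t\kappa_s\,ds}$ coming from \eqref{Sigma9}, gives $\Sigma_t=\sigma_v^2D_te^{\int_0^t2m_s\,ds}$. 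Then $\lambda_t=\sqrt{\Sigma_t/G_t}=e^{\int_0^tm_s\,ds}\varepsilon^{-(H-\frac12)}\sigma_v/\sigma_t$, and \eqref{e55} simplifies to $\theta_t^*=\tfrac{1}{\lambda_tD_t}(v-P_t)-(H-\tfrac12)\psi_t$, as claimed.

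\textbf{Step 3 (expected trading rate) and the main obstacle.} Because $\lambda_s\varepsilon^{H-\frac12}\sigma_s=e^{\int_0^sm_u\,du}\sigma_v$ is deterministic, solving the linear SDE \eqref{for19} exhibits $v-P_t$ as a Gaussian functional of $\{W_s\}_{s\le t}$ only, hence independent of $\sigma_t$, which is driven by the martingale $M$ independent of $W$ (as used in Lemma~\ref{L44}); also $\mathrm{E}[v-P_t\mid v,\mathcal{F}_0]=e^{-\int_0^t\kappa_s\,ds}(v-P_0)$ and $\mathrm{E}[\sigma_t\mid\mathcal{F}_0]=\sigma_0e^{\int_0^tm_s\,ds}$ since the relevant stochastic integrals are true martingales, while $\mathrm{E}[\psi_t\mid v,\mathcal{F}_0]=0$ as $\psi_t$ is a Wiener integral. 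Taking the conditional expectation of $\theta_t^*$, factoring over the independent noises, and inserting the $D_t$-identity once more delivers the stated $\mathrm{E}[\theta\mid v,\mathcal{F}_0]$. The only genuinely nontrivial point is Step~1: finding the ansatz $G_t=\varepsilon^{2H-1}\sigma_t^2D_t$ and matching the It\^{o} drift of $\sqrt{G_t}$ to the BSDE generator through the ODE $D_t'=-1-2m_tD_t$; everything afterwards is bookkeeping built on Remark~\ref{rre}, Proposition~\ref{prop1}, and that single identity.
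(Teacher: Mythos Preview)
Your proof is correct and follows the paper's overall strategy: guess the ansatz $G_t=\varepsilon^{2H-1}\sigma_t^2D_t$, verify it, invoke the uniqueness in Lemma~\ref{G001}, and then substitute into Remark~\ref{rre} and Proposition~\ref{prop1}. The only noteworthy difference is in how the verification is carried out. The paper works with the integral form \eqref{GGT}: it uses $\mathrm{E}[\sigma_u\mid\mathcal{F}_t]=\sigma_t e^{\int_t^u m_s\,ds}$ (which follows from $m$ being deterministic and $M$ a martingale) to reduce \eqref{GGT} to the deterministic integral identity $\sqrt{D_t}=\int_t^T\frac{e^{\int_t^u m_s\,ds}}{2\sqrt{D_u}}\,du$ and then checks this directly. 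You instead apply It\^o's formula to $\varepsilon^{H-1/2}\sigma_t D_t^{1/2}$ and match the drift to the BSDE generator via the ODE $D_t'=-1-2m_tD_t$. These are dual verifications of the same fact; your differential route is slightly more transparent because the ODE for $D_t$ does the cancellation in one line, whereas the paper's integral identity needs a separate (and unshown) check. For Step~3 the arguments coincide: both rely on the independence of $W$ and $M$ to factor $\mathrm{E}[\sigma_t(v-P_t)]$, together with $\mathrm{E}[\psi_t]=0$ and the same $D_t$-identity.
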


\begin{proof}
With the utilization the martingale property, it is straightforward from \eqref{MMT} that
\begin{equation}
\mathrm{E}\left[\sigma_u|\mathcal{F}_t\right]=\sigma_te^{\int_{t}^{u}m_sds}.
\end{equation}
In this case, if we assume that the solution to $G_t$ takes the form of \eqref{GGT2}, then it follows from \eqref{GGT} that
$$\sqrt{D_t}=\int_{t}^{T}\frac{e^{\int_{t}^{u}m_sds}}{2\sqrt{D_u}}du.$$
Clearly, our guess is correct if $D_t$ specified in the proposition satisfies this integral equation, which is exactly the case here. Considering the uniqueness we established above, \eqref{GGT2} is indeed the expression of the target solution.

The expected trading rate of the insider can be computed from
\begin{equation}
\mathrm{E}[\theta_t|v,\mathcal{F}_0]=E\left[\frac{v-P_t}{\sqrt{\Sigma_t}}\frac{\varepsilon^{2H-1}\sigma_t^2}{\sqrt{G_t}}-(H-\frac{1}{2})\psi_t\right]=E\left[\frac{v-P_t}{\sqrt{\Sigma_t}}\frac{\varepsilon^{2H-1}\sigma_t^2}{\sqrt{G_t}}\right]=\frac{v-P_0}{\sqrt{\Sigma_0}}e^{-\int_{0}^{t}\frac{1}{2D_s}ds}\frac{\sigma_0\varepsilon^{H-\frac{1}{2}}e^{\int_{0}^{t}m_sds}}{\sqrt{D_t}},
\end{equation}
which is a direct result of \eqref{GGT}, the dynamic of $h_t$ specified in Lemma \ref{L44}, and $\psi_t$ being a martingale. Thus, using the identity
$$e^{-\int_{0}^{t}\frac{1}{2D_s}ds}=\sqrt{\frac{D_t}{D_0}}e^{-\int_{0}^{t}m_sds},$$
directly yields the desired result.

The other results in the proposition actually follow from Proposition \ref{prop1}, and this has completed the proof.
\end{proof}

It should be remarked that our model has successfully taken into consideration the effect of long memory, after the introduction of the fractional Brownian motion. Of course, this model takes \cite{Collin16} as a special case when $H=\frac{1}{2}$ is set in\eqref{for19}, and it will further degenerate to the continuous-time Kyle model \cite{Back92} when $\sigma_t=\sigma$ with $m_t=v_t=0$ and $D_t=T-t$. In this case, with $\sigma_v^2=\frac{\Sigma_0}{T}$ being the annualized variance of the market maker's prior, both of the price volatility and price impact are constant, being equal to $\sigma_v$ and $\frac{\sigma_v}{\sigma}$, respectively.

It should also be noted that the price volatility and the posterior variance of the fundamental value $\Sigma_t$ are deterministic, as a result of the growth rate of the noise trading volatility being deterministic, while the price impact is stochastic and negatively correlated with the noise trading volatility. On the other hand, the optimal trading trading strategy of the insider is not only negatively and positively dependent on the price impact $\lambda_t$ and the liquidity state $\kappa_t$, respectively, it also exhibits the property of long memory, due to the introduction of the fractional Brownian motion. It is also interesting to notice that the fractional noise has no influence on the expected trading rate of the insider, which is also reasonable as this is in the sense of average.

\begin{figure}[H]\centering
\includegraphics[width=.5\textwidth]{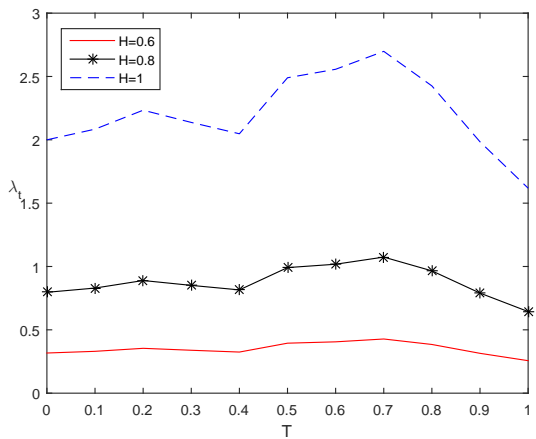}
\caption{The impact of $H$ on $\lambda_t$. Parameter values are: $\Sigma_0=0.2^2$, $\sigma_0=1$, $m_s=1$, T=1, $\varepsilon=0.01$.}
  \label{fig2}
\end{figure}

\begin{figure}[H]\centering
\includegraphics[width=.5\textwidth]{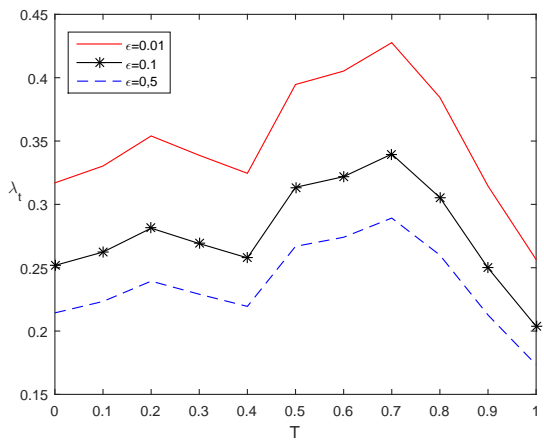}
\caption{The impact of $\varepsilon$ on $\lambda_t$. Parameter values are: $\Sigma_0=0.2^2$, $\sigma_0=1$, $m_s=1$, T=1, $H=0.6$.}
  \label{fig1}
\end{figure}

To further investigate the effect of long memory, how the price impact $\lambda_t$ (Kyle's lambda) changes with respect to the long-memory parameter $H$ and the approximation factor $\varepsilon$ is numeircally shown. In specific, Fig. \ref{fig2} displays that the long-memory parameter $H$ has a positive influence on the price impact, as a higher $H$ value contributes to a higher price impact. This can be understood by the fact that more information is observed from the order flow when the volatility shows greater long range dependence. Moreover, the price impact is more sensitive to the time when $H$ increases, while it is almost a constant when $H$ is low, which is expected as the approximate fractional Brownian motion degenerates to the standard Brownian motion when $H$ approaches $\frac{1}{2}$ and the long memory property no longer exists. On the other hand, an apposite trend can be observed in Fig. \ref{fig1} that the smaller the approximation factor $\varepsilon$ is, the larger the price impact will be. This is also reasonable since there will be less fractional noise when $\varepsilon$ decreases, implying that more information will be revealed, leading to a larger price impact.

However, it needs to be stressed that although the price impact is stochastic, the price volatility is still a deterministic function, and neither contemporaneous relation between volume changes and the price volatility nor that between the price impact and price volatility can be generated. Such relations can only be generated when the growth rate of the noise trading volatility is stochastic. In the next subsection, a framework generating both stochastic price volatility and a meaningful correlation between the price volatility and volume will be introduced.

\subsection{Stochastic growth rate of the noise trading volatility}

In this subsection, a special case where the noise trading volatility follows a two-state continuous Markov chain is presented, and this introduces state-dependent predictability, resulting in successfully capturing the stochastic expected growth rate in the noise trading volatility.

We start by specifying the dynamic of $\sigma_t$ as
\begin{equation}
d\sigma_t=\left(\sigma^H-\sigma_t\right)dN_L(t)-(\sigma_t-\sigma^L)dN_H(t),
\end{equation}
where $\sigma^L$ and $\sigma^H$ are two fixed values satisfying $\sigma^L<\sigma^H$, the initial level of the volatility $\sigma_0\in \{\sigma^H,\sigma^L\}$, and $N_i(t)$ is a standard Poisson counting process with jump intensity $\lambda_i$ for $i=H$, $L$. In this case, the solution to \eqref{GGT} is presented in the following proposition.

\begin{proposition}
The unique bounded solution to \eqref{GGT} is given by
$$G_t=\textbf{1}_{\{\sigma_t=\sigma^L\}}\varepsilon^{2H-1}G^L(T-t)+\textbf{1}_{\{\sigma_t=\sigma^H\}}\varepsilon^{2H-1}G^H(T-t),$$
where $G^L$ and $G^H$ are the solutions to the following ODEs (ordinary differential equations)
\begin{equation}\label{JJ}
\left\{
\begin{aligned}
dG^L(\tau)&=(\sigma_L)^2+2\lambda_L\left(\sqrt{G^H(\tau)G^L(\tau)}-G^L(\tau)\right),\\
dG^H(\tau)&=(\sigma_H)^2+2\lambda_H\left(\sqrt{G^H(\tau)G^L(\tau)}-G^H(\tau)\right)
\end{aligned}
\right.
\end{equation}
with the boundary conditions $G^L(0)=G^H(0)=0$.
\end{proposition}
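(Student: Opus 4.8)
The plan is to use that $\sigma_t$ is a time-homogeneous two-state Markov chain, so that by the Markov property conditioning on $\mathcal{F}_t^\sigma$ in \eqref{GGT} amounts to conditioning on the current state $\sigma_t$; this motivates the regime-switching ansatz
$$
\widehat{G}_t=\varepsilon^{2H-1}\left(\textbf{1}_{\{\sigma_t=\sigma^L\}}G^L(T-t)+\textbf{1}_{\{\sigma_t=\sigma^H\}}G^H(T-t)\right),
$$
with $G^L,G^H$ deterministic functions of the remaining time $\tau=T-t$. Since Lemma \ref{G001} already guarantees that \eqref{GGT} has a \emph{unique} bounded solution under $\Sigma_T=0$ and $\sigma^L\le\sigma_t\le\sigma^H$ (take $\underline{\sigma}=\sigma^L>0$ and $\overline{\sigma}=\sigma^H$ in \eqref{bound1}), it is enough to (i) determine the system that $G^L,G^H$ must obey so that $\sqrt{\widehat{G}_t}$ solves \eqref{GGT}, and check that it is exactly \eqref{JJ}; and (ii) show that \eqref{JJ} with $G^L(0)=G^H(0)=0$ does possess a bounded nonnegative solution, so that $\widehat{G}_t$ is a legitimate bounded solution and hence, by that uniqueness, equals $G_t$.

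For (i), I would apply the change-of-variable formula for finite-variation pure-jump processes to $\sqrt{\widehat{G}_t}$. Between jumps $\sigma_t$ is constant, so in state $i\in\{L,H\}$ the absolutely continuous part of $d\sqrt{\widehat{G}_t}$ is $-\varepsilon^{H-\frac{1}{2}}\tfrac{(G^i)'(T-t)}{2\sqrt{G^i(T-t)}}\,dt$; a transition out of state $L$ is triggered by $N_L$ (intensity $\lambda_L$) and produces the jump $\varepsilon^{H-\frac{1}{2}}(\sqrt{G^H(T-t)}-\sqrt{G^L(T-t)})$, and symmetrically for state $H$ via $N_H$. Compensating the Poisson jumps by their constant intensities and demanding that $\sqrt{\widehat{G}_t}+\int_0^t\tfrac{\varepsilon^{2H-1}\sigma_s^2}{2\sqrt{\widehat{G}_s}}\,ds$ be a martingale — which, together with $\widehat{G}_T=0$, is equivalent to \eqref{GGT} — forces the drift to vanish in each regime. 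Multiplying the state-$L$ condition by $2\varepsilon^{-(H-\frac{1}{2})}\sqrt{G^L(\tau)}$ gives $(G^L)'(\tau)=(\sigma^L)^2+2\lambda_L(\sqrt{G^H(\tau)G^L(\tau)}-G^L(\tau))$, and likewise for $G^H$, i.e. precisely \eqref{JJ}, while $\widehat{G}_T=0$ yields $G^L(0)=G^H(0)=0$.

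For (ii), the right-hand side of \eqref{JJ} is locally Lipschitz on $\{G^L,G^H>0\}$ but has a square-root singularity at $\tau=0$, and handling this is the main obstacle. I would proceed as in the proof of Lemma \ref{G001}: the a priori bounds $(\sigma^L)^2\tau\le G^i(\tau)\le(\sigma^H)^2\tau$ (mirroring \eqref{bound1}) confine any solution to a corridor where $G^i(\tau)\asymp\tau$, which makes $1/\sqrt{\widehat{G}_s}\le(\varepsilon^{H-\frac{1}{2}}\sigma^L\sqrt{T-s})^{-1}$ integrable near $s=T$, so $\widehat{G}_t$ is bounded and $\int_0^T\tfrac{\varepsilon^{2H-1}\sigma_s^2}{2\sqrt{\widehat{G}_s}}\,ds<\infty$; existence of a solution issuing from the origin and remaining in the corridor then follows from a standard approximation argument (solve from $\tau=\delta$ with data in the corridor and let $\delta\downarrow0$, using the leading behaviour $G^i(\tau)\sim(\sigma^i)^2\tau$), or from Lemmas \ref{Lea}--\ref{Lec} applied componentwise. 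Once $\widehat{G}_t$ is identified as a bounded solution of \eqref{GGT}, the uniqueness assertion of Lemma \ref{G001} finishes the proof.
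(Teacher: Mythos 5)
Your proposal is correct and follows essentially the same route as the paper: the regime-switching ansatz, the observation that $\sqrt{G(t,\sigma_t)}+\int_0^t\frac{\sigma_u^2}{2\sqrt{G(u,\sigma_u)}}\,du$ is a pure-jump martingale precisely when the drift cancellation yields \eqref{JJ}, and uniqueness inherited from Lemma \ref{G001}. Your step (ii), establishing existence of a bounded nonnegative solution of \eqref{JJ} despite the square-root singularity at $\tau=0$, is a detail the paper's proof silently omits, so including it is a genuine improvement rather than a deviation.
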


\begin{proof}
If we define
$$G(t,\sigma_t)=\textbf{1}_{\{\sigma_t=\sigma^L\}}G^L(T-t)+\textbf{1}_{\{\sigma_t=\sigma^H\}}G^H(T-t)$$
and
$$M(t)=\sqrt{G(t,\sigma_t)}+\int_{0}^{t}\frac{\sigma_u^2}{2\sqrt{G(u,\sigma_u)}}du,$$
then it is not difficult to find that $M(t)$ is a pure jump martingale, which implies that $M(t)=\mathrm{E}[M(T)|\mathcal{F}_t]$, leading to
$$\sqrt{G(t,\sigma_t)}+\int_{0}^{t}\frac{\sigma_u^2}{2\sqrt{G(u,\sigma_u)}}du=\mathrm{E}\left[\left.\sqrt{G(T,\sigma_T)}+\int_{0}^{T}\frac{\sigma_u^2}{2\sqrt{G(u,\sigma_u)}}du\right|\mathcal{F}_t\right].$$
Considering the boundary conditions $G^L(0)=G^H(0)=0$, it is straightforward that
$$\sqrt{G(t,\sigma_t)}=\mathrm{E}\left[\left.\int_{t}^{T}\frac{\sigma_u^2}{2\sqrt{G(u,\sigma_u)}}du\right|\mathcal{F}_t\right],$$
and thus
$$\sqrt{\varepsilon^{2H-1}G(t,\sigma_t)}=\varepsilon^{H-\frac{1}{2}}\sqrt{G(t,\sigma_t)}=\mathrm{E}\left[\left.\int_{t}^{T}\frac{\varepsilon^{2H-1}\sigma_u^2}{2\sqrt{\varepsilon^{2H-1}G(u,\sigma_u)}}du\right|\mathcal{F}_t\right].$$
This has completed the proof.
\end{proof}

According to Remark \ref{rre}, the price dynamic can be written as
\begin{equation}
dP_t=\kappa(t,\sigma_t)(v-P_t)dt+\sigma_P(t)dW_t,
\end{equation}
where
$$\sigma_P(t)=\sqrt{\Sigma_0}e^{-\int_{0}^{t}\frac{1}{2}\kappa(s,\sigma_s)ds}\sqrt{\kappa(t,\sigma_t)}$$
and
$$\kappa(t,\sigma_t)=\textbf{1}_{\{\sigma_t=\sigma^L\}}\kappa^L(T-t)+\textbf{1}_{\{\sigma_t=\sigma^H\}}\kappa^H(T-t)$$
with
$$
\kappa^i(T-t)=\frac{(\sigma^i)^2}{G^i(T-t)}, \quad i=L,H.
$$
Clearly, it can be easily observed that the price now follows a mean-reverting process with stochastic volatility, with both of the mean-reversion speed and the volatility being controlled by the Markov chain. Moreover, a high value of the noise trading volatility always contributes a higher mean-reversion speed, which is because the insider is expected to trade more aggressively. It should also be noted that both the mean-reversion speed and the volatility would approach infinity when the time becomes closer to expiry, which can be explained by the fact that agents trade more and more aggressively when it is approaching expiry since they do not want to leave any money on the table. A positive relationship between volume changes and the price volatility can also be witnessed, as the price volatility always jumps in the same direction as the noise trading volatility does.

\subsection{Extensions and discussions}

While all the discussions above are based on the assumption that the aggregate order flow and noise trading volatility are conditionally uncorrelated for the simplicity of the illustration, this assumption can in fact be relaxed. To illustrate this, let us consider a more general model with the total order flow $Y'_t$ being defined as follows:
\begin{equation}
dY^*_t=\theta_tdt+\sigma_tdB^{\varepsilon,H}_t+\eta(t,\sigma^t,Y^*_t)dM_t.
\end{equation}
If we further define
\begin{equation}\label{E36}
dY_t=dY^*_t-\frac{\eta(t,\sigma^t,Y^*_t)}{v(t,\sigma^t)}(d\sigma_t-m(t,\sigma^t)dt)=\eta_tdt+\sigma_tdB^{\varepsilon,H}_t,
\end{equation}
which is exactly the same as what is presented in \eqref{E36}, then it can be easily shown that all our results above are unchanged under this generalized framework, since observing $(Y^{*}_{t},\sigma_t)$ is equivalent to observing $Y_t,\sigma_t$ for all market participants. However, one should also notice that our equilibrium proof does depend on the assumption that the history of $Y^t$ has no influence on the future dynamic of $\sigma_t$. On the other hand, since the price change is now linear in $Y_t$, it is no longer linear in the total order flow, and instead it is only linear in the component of the order flow that is informative about the insider's actions.

Inspired by the ideas of \cite{Andersen96,Clark73,Collin16,Gallant92}, we also model the time series of price changes as subordinated to the normal distribution. To capture the heteroscedasticity in returns, microeconomic foundations for such a subordinate process modeling the stock return are provided under our model, with the directing process being endogenous and related to the trading volume, and the results are presented in the following proposition.

\begin{proposition}\label{PP2}
If we define the positive increasing stochastic directing process as
$$\tau_t=T\left(1-e^{-\int_{0}^{t}\frac{\varepsilon^{2H-1}\sigma_u^2}{G_u}du}\right)$$
and assume $\sigma_v=\frac{\Sigma_0}{T}$, then
$$\Sigma_t=\sigma_v^2(\tau_T-\tau_t)$$
and
$$dP_t=\frac{v-P_t}{\tau_T-\tau_t}d\tau_t+\sigma_vdW'_{\tau_t}$$
for some Browinian motion $W'$ independent of $M$, with its definition as $\sigma_vdW'_\tau=\varepsilon^{H-\frac{1}{2}}\lambda_t\sigma_tdW_{\tau_t}$.
\end{proposition}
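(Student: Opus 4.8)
The plan is to derive both claims from the reduced equilibrium dynamics \eqref{for19}--\eqref{e20} of Lemma~\ref{L43}, written in terms of the mean-reversion rate $\kappa_t=\varepsilon^{2H-1}\sigma_t^2/G_t$ introduced in Remark~\ref{rre}. Integrating \eqref{e20}, i.e. $d\Sigma_t=-\kappa_t\Sigma_t\,dt$, gives $\Sigma_t=\Sigma_0 e^{-\int_0^t\kappa_u\,du}$. By the definition $\tau_t=T(1-e^{-\int_0^t\kappa_u\,du})$ this means $e^{-\int_0^t\kappa_u\,du}=1-\tau_t/T$, hence $\Sigma_t=\Sigma_0(T-\tau_t)/T$. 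The terminal condition $\Sigma_T=0$ (equivalently, the lower bound $\int_0^t\kappa_u\,du\ge\varepsilon^{2H-1}(\underline{\sigma}^2/\overline{\sigma}^2)\log(T/(T-t))$ extracted in the proof of Lemma~\ref{L43}, which forces $\int_0^T\kappa_u\,du=\infty$) yields $\tau_T=T$, so with $\sigma_v^2=\Sigma_0/T$ we get $\Sigma_t=\sigma_v^2(\tau_T-\tau_t)$, the first assertion.

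Next I would differentiate $\tau$: $d\tau_t=T\kappa_t e^{-\int_0^t\kappa_u\,du}\,dt=\kappa_t(T-\tau_t)\,dt=\kappa_t(\tau_T-\tau_t)\,dt$, i.e. $\kappa_t\,dt=d\tau_t/(\tau_T-\tau_t)$. The drift in \eqref{for19} equals $\frac{v-P_t}{G_t}\varepsilon^{2H-1}\sigma_t^2\,dt=(v-P_t)\kappa_t\,dt$, which after this substitution is exactly $\frac{v-P_t}{\tau_T-\tau_t}\,d\tau_t$. It remains to identify the martingale part $N_t:=\int_0^t\varepsilon^{H-\frac12}\lambda_s\sigma_s\,dW_s$ (recall $\lambda_s=\sqrt{\Sigma_s/G_s}$ from Lemma~\ref{G001}) with $\sigma_v W'_{\tau_t}$ for a Brownian motion $W'$. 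For this I would compute $\langle N\rangle_t=\int_0^t\varepsilon^{2H-1}\lambda_s^2\sigma_s^2\,ds=\int_0^t\kappa_s\Sigma_s\,ds$ and use $\Sigma_s=\sigma_v^2(\tau_T-\tau_s)$ together with $\kappa_s\,ds=d\tau_s/(\tau_T-\tau_s)$ to obtain $\langle N\rangle_t=\sigma_v^2\int_0^t d\tau_s=\sigma_v^2\tau_t$. By the Dambis--Dubins--Schwarz time-change theorem for continuous local martingales (the mechanism behind Lemma~\ref{strong0}, applied after extending $N$ beyond $T$ with an independent Brownian increment so that its bracket is exhausted), there is a Brownian motion $B$ with $N_t=B_{\langle N\rangle_t}=B_{\sigma_v^2\tau_t}$; then $W'_u:=\sigma_v^{-1}B_{\sigma_v^2 u}$ is a standard Brownian motion with $N_t=\sigma_v W'_{\tau_t}$, i.e. $\varepsilon^{H-\frac12}\lambda_t\sigma_t\,dW_t=\sigma_v\,dW'_{\tau_t}$. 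Substituting into \eqref{for19} produces the stated dynamics of $P_t$.

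The one genuine difficulty I anticipate is the independence of $W'$ from $M$. The key observation is that $\sigma_s$, $G_s$, $\lambda_s$ and the time change $\tau_s$ are all adapted to the volatility filtration $\mathcal{F}^\sigma$ generated by $M$, while $W$, the Brownian motion driving the (approximate) fractional noise, is independent of $M$. Consequently, conditionally on $\mathcal{F}^\sigma_T$ the process $N$ is a centred Gaussian martingale whose bracket $t\mapsto\sigma_v^2\tau_t$ is deterministic, so the conditional law of the time-changed process $W'$ is exactly that of a standard Brownian motion, independent of the conditioning; hence $W'$ is independent of $\mathcal{F}^\sigma$, and in particular of $M$. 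Everything else reduces to the elementary substitutions indicated above.
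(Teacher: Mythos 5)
Your argument is correct and follows the same route as the paper's own (very terse) proof: differentiate the time change to get $d\tau_t=-\sigma_v^{-2}d\Sigma_t=\kappa_t(\tau_T-\tau_t)\,dt$, substitute into the dynamics \eqref{for19}--\eqref{e20}, and identify the martingale part via its quadratic variation $\sigma_v^2\tau_t$ as in Remark \ref{R41}. You in fact supply details the paper omits --- notably the Dambis--Dubins--Schwarz construction of $W'$ and the conditioning-on-$\mathcal{F}^\sigma$ argument for its independence from $M$ --- and you correctly read the stated normalization as $\sigma_v^2=\Sigma_0/T$ (the proposition's ``$\sigma_v=\Sigma_0/T$'' is a typo).
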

\begin{proof}
The definition of the time-change yields
$$d\tau_t=-\frac{1}{\sigma_v^2}d\Sigma_t=\frac{\varepsilon^{2H-1}\Sigma_t\sigma_t^2}{\sigma_v^2G_t}=\frac{\varepsilon^{2H-1}(\tau_T-\tau_t)\sigma_t^2}{G_t},$$
the substitution of which into the equilibrium price process along with Lemma \ref{L43} and Remark \ref{R41} leads to the desired result.
\end{proof}


From Proposition \ref{PP2}, it is clear that the equilibrium price is a time-changed Brownian Bridge, similar to what is presented in the Kyle-Back model, where the price process is a standard Brownian Bridge. However, our equilibrium can not be derived simply using a time-change of that model, which is a result of the fact that the price impact is constant in the Kyle-Back model, while it is a stochastic process in our case. One may also find that price is a time-changed Brownian motion, belonging to the class of subordinate processes proposed in \cite{Clark73}, in the market maker's filtration. Moreover, our model gives an endogenous expression for the directing process $\tau_t$, which depends on the (uninformed) volume dynamic and the fractional noise, while having no requirement on the specification of a latent information process to generate stochastic volatility \cite{Andersen96}. An obvious advantage of this model is its generality, as the directing process can be determined for any dynamic of volume, typical examples of which include Normal, Poisson, and Log-normal, implying that major stylized facts can be jointly taken into consideration, which is an important property according to \cite{Gallant92}.

\section{Conclusions}
In this paper, we propose a modified the Kyle model for dynamic insider trading, with the noise trading volatility and trading volume being respectively governed by a general stochastic process and a fractional stochastic process. Under equilibrium conditions, the resulted equilibrium price process exhibits excessive volatility because of the insider trading more aggressively when uninformed volume is higher. The optimal insider trading strategy displays long memory, and the price impact is negatively correlated with the noise trading volatility, which is also affected by the fractional noise.

The model makes many simplifying assumptions that could be relaxed to further our standing of how information flows into prices and how price volatility, price impact and trading volume change. For instance, we can consider the investment under varying time horizon instead of fixed horizon and assume that the presence of the insider is common knowledge for the market. We leave these extensions for future research.

\end{document}